\documentclass[%
 reprint,
superscriptaddress,
% groupedaddress,
% unsortedaddress,
%runinaddress,
%frontmatterverbose, 
% preprint,
%preprintnumbers,
%nofootinbib,
%nobibnotes,
%bibnotes,
 amsmath,amssymb,
 aps,
pra,
%prb,
%rmp,
%prstab,
%prstper,
%floatfix,
]{revtex4-2}

% Packages
\usepackage{graphicx}
\usepackage{dcolumn}
\usepackage{bm}
\usepackage{amsfonts}
\usepackage{amsthm}
\usepackage{tikz}
\usepackage{physics} 
\usepackage{xcolor}
\usepackage{enumitem}

% Graphics
\graphicspath{{figures/}}

% TikZ
\usetikzlibrary{decorations.pathmorphing,arrows.meta,calc,positioning}
\definecolor{tensorpurple}{RGB}{110,40,180}
\definecolor{tensoryellow}{RGB}{238,190,40}
\definecolor{auxred}{RGB}{200,30,30}

\tikzset{
  leg/.style={line width=0.8pt},
  wleg/.style={leg,decorate,decoration={snake,amplitude=0.9pt,segment length=4.5pt}},
  idx/.style={font=\scriptsize},
  tensP/.style={draw,fill=tensorpurple,minimum size=8mm,rounded corners=0.8mm},
  tensY/.style={draw,fill=tensoryellow,minimum size=8mm,rounded corners=0.8mm},
  note/.style={font=\scriptsize,align=left},
  eqn/.style={font=\scriptsize}
}

% Theorems
\newtheorem{lemma}{Lemma}

% Adjustable size knobs (used by TikZ figures)

% Liouville-space double-ket/bra notation
\newcommand{\kk}[1]{\mathopen{}\lvert\mkern-2mu#1\rangle\!\rangle\mathclose{}}
\newcommand{\bb}[1]{\mathopen{}\langle\!\langle#1\mkern-2mu\rvert\mathclose{}}

% Log convention

% Revision-highlighting helpers
\newif\ifhighlightrevisions
\highlightrevisionstrue
\newcommand{\rev}[1]{\ifhighlightrevisions{\color{black}#1}\else#1\fi}

\begin{document}

\title{Quantum Dimension Reduction of Hidden Markov Models}

\author{Rishi Sundar}
\affiliation{Department of Physics \& Astronomy, University of Manchester, Manchester M13 9PL, United Kingdom}
\affiliation{Centre for Quantum Science and Engineering, University of Manchester, Manchester M13 9PL, United Kingdom}

\author{Thomas J.~Elliott}
\affiliation{Department of Physics \& Astronomy, University of Manchester, Manchester M13 9PL, United Kingdom}
\affiliation{Department of Mathematics, University of Manchester, Manchester M13 9PL, United Kingdom}
\affiliation{Centre for Quantum Science and Engineering, University of Manchester, Manchester M13 9PL, United Kingdom}

\date{January 22, 2026}

\begin{abstract}
Hidden Markov models (HMMs) are ubiquitous in time-series modelling, with applications ranging from chemical reaction modelling to speech recognition. These HMMs are often large, with high-dimensional memories. A recently-proposed application of quantum technologies is to execute quantum analogues of HMMs. Such quantum HMMs (QHMMs) are strictly more expressive than their classical counterparts, enabling the construction of more parsimonious models of stochastic processes. However, state-of-the-art techniques for QHMM compression, based on tensor networks, are only applicable for a restricted subset of HMMs, where the transitions are deterministic. In this work we introduce a pipeline by which \emph{any} finite, ergodic HMM can be compressed in this manner, providing a route for effective quantum dimension reduction of general HMMs. We demonstrate the method on both a simple toy model, and on a speech-derived HMM trained from data, \rev{obtaining favourable memory--accuracy trade-offs in the examples studied, relative to a simple classical state-merging baseline}.
\end{abstract}

\maketitle

\section{Introduction}

Hidden Markov models (HMMs) provide a compact description of discrete-time
stochastic processes in terms of a finite internal memory that evolves
stochastically while emitting observable symbols
\cite{rabiner1986introduction,shalizi2001computational}. They are used widely across
scientific modelling and data analysis, including speech recognition and
other sequential data \cite{rabiner1989hmm, baldi1994hidden, ghahramani1996factorial, fine1998hierarchical, seymore1999learning,  krogh2001predicting, stanke2003gene, karlof2003hidden, bhar2004hidden, gammelmark2014hidden,  warden2018speech}. As learned HMMs
grow in size and connectivity, a central challenge is to reduce their
effective memory cost while retaining the observable statistics that make
them useful.

Classical reduction techniques, such as state merging, aim to simplify a model while preserving its predictions
\cite{stolcke1994bmm,stolcke1996merging,dupont2005links,dupont2000alergia,singh2004psr}.
In practice, however, these methods can involve large intermediate
representations or trade state reduction against loss of longer-range
predictive structure when applied to realistic learned models with many
states and rich transition patterns. This raises a basic question: given a
complex HMM that fits data well, how much memory is \emph{really} required to
reproduce its observable statistics, and can that memory cost be reduced in
a controlled way?

Quantum models of stochastic processes provide another route to memory
reduction. By encoding predictive information into non-orthogonal quantum
memory states, quantum simulators can reproduce the statistics of a classical
process while reducing the required memory~\cite{gu2012quantum, mahoney2016occam, riechers2016minimized, aghamohammadi2018extreme, binder2018practical, Liu_2019, Elliott_2021, adhikary2021quantum, elliott2022quantum, zonnios2025quantum, riechers2025identifiability}. The $q$-sample
construction makes this connection concrete by mapping the stationary
statistics of a process to a translationally invariant quantum state on an
infinite chain, with entanglement across a cut quantifying the memory
resources needed for sampling \cite{schuld2018quantum, Yang_2018, yang2024dimensionreductionquantumsampling}. For processes admitting a
finite-dimensional $q$-sample iMPS, standard uniform-MPS algorithms allow
controlled bond-dimension truncation, trading a small loss in fidelity rate
for a large reduction in bond dimension
\cite{vanderstraeten_tangent-space_2019,yang2024dimensionreductionquantumsampling}, corresponding to quantum models of stochastic processes with significantly reduced memory dimension compared to classical HMMs.

\rev{Born-machine approaches provide a useful neighbouring point of comparison, since both frameworks use tensor-network probability representations based on squared amplitudes. Our focus here is different: we start from a given finite-state HMM and study its compression into a lower-dimensional quantum sequential generator, rather than direct generative learning from data. This places the present work more naturally in the HQMM / \(q\)-sample setting of recurrent process models, while keeping a clear conceptual link to Born-machine methods. Related work on HQMMs has also considered the direct learning of sequential quantum models from data~\cite{srinivasan2017LearningHQMM}, which is again distinct from the present post-training compression setting.}

A key limitation is that existing iMPS compression results apply most cleanly
when the underlying process admits a \emph{deterministic} (or \emph{unifilar})
presentation, where the next internal state is fixed by the current state
and emitted symbol. In this case the $q$-sample admits a \emph{normal} iMPS representation with
a primitive transfer operator and a well-conditioned canonical form
\cite{Yang_2018,yang2024dimensionreductionquantumsampling,perezgarcia2007matrixproductstaterepresentations}.
HMMs learned directly from data, however, are typically non-deterministic, and naive
tensor-network constructions can yield non-normal iMPS (or mixed-state
tensor-network descriptions) for which stable infinite-system truncation is
not available \cite{yang2024dimensionreductionquantumsampling}. This creates
a practical gap between realistic learned HMMs and the class of truncated quantum models
accessible to current iMPS-based compression pipelines.

Here we bridge that gap. We introduce a dilation that augments the output
alphabet of any finite-state, stationary, ergodic HMM so that the resulting
model is deterministic while preserving all finite-length word statistics on
the original alphabet -- without increasing the model's memory dimension. The $q$-sample of the dilated process is
guaranteed to be representable as a normal iMPS, enabling stable variational truncation to a
target bond dimension $\tilde d$. From the compressed tensors we can reconstruct
an effective compressed quantum model of the original process.

We demonstrate the method on (i) a tunable non-deterministic
source family of toy models and (ii) a speech-derived HMM trained from data. In both cases we
obtain a controlled trade-off between bond dimension and distortion. The compressed
iMPS provides a direct specification of a quantum sampler with memory
dimension $\tilde d$. \rev{Our aim is thus to provide a post-training compression pipeline for general finite ergodic HMMs, rather than a direct data-driven training method for compressed quantum generators.}

\section{Background}

\subsection{Classical models of stochastic processes}

We consider a discrete-time stochastic process \(\mathcal P\) that generates a
bi-infinite sequence of random variables \(\{X_t\}_{t\in\mathbb Z}\), each
taking values in a finite alphabet \(\mathcal X\)~\cite{khintchine1934korrelationstheorie}. The process is fully
specified by the joint distribution
\(\Pr(\dots,X_{-1},X_{0},X_{1},\dots)\) over all times. In practice, modelling
and simulating such a process requires a finite description of the
statistical dependencies between past and future~\cite{crutchfield2012between}.

Hidden Markov models (HMMs) provide one such description~\cite{rabiner1986introduction, upper1997theory}. An HMM is specified
by a tuple \((\mathcal S,\mathcal X,\{T^{x}\})\), where \(\mathcal S\) is a
finite set of hidden states, \(\mathcal X\) is the set of observable
symbols, and the transition tensor \(T^{x}\) has elements
\begin{equation}
T^{x}_{s's} = \Pr(S_{t+1}=s',X_{t}=x \mid S_t=s)
\end{equation}
for all \(s,s'\in\mathcal S\) and \(x\in\mathcal X\). For each fixed state
\(s\) the matrices satisfy \(\sum_{s',x} T^{x}_{s's} = 1\), so that they
define a proper conditional distribution over next states and outputs.

A process has a \emph{deterministic} HMM representation if the next internal
state is uniquely determined by the current state and the emitted symbol~\cite{crutchfield1989inferring, shalizi2001computational}. In
terms of the transition structure this means that for every state \(s\) and
symbol \(x\) there is at most one successor state \(s'\) with nonzero
transition probability:
\begin{equation}
\bigl|\{\,s' : T^{x}_{s's} > 0 \,\}\bigr| \le 1
\quad \forall\, s\in\mathcal S,\; x\in\mathcal X.
\end{equation}
Equivalently, given the initial state and the observed symbol sequence, the present
internal state is uniquely determined. Deterministic models play a
central role in the branch of complexity science known as computational mechanics: the \(\varepsilon\)–machine corresponds to the minimal deterministic HMM that exactly reproduces a process,
and its memory cost is the Shannon entropy of the stationary distribution
over causal states, the statistical complexity \(C_{\mu}\)~\cite{crutchfield1989inferring, shalizi2001computational}, often interpreted as a measure of structure~\cite{crutchfield1994calculi, crutchfield1997statistical, crutchfield2012between}.
Non-deterministic HMMs, in contrast, can represent the same process with
fewer states, but at the cost of ambiguous internal state trajectories
conditioned on observations~\cite{lohr2009non, ruebeck2018prediction}.

Throughout this work we restrict attention to finite-state, finite-alphabet,
stationary and ergodic processes, so that a unique stationary distribution
over internal states exists and time-translation-invariant descriptions are
well defined.

\subsection{Quantum models of stochastic processes}

Quantum models extend this framework by encoding predictive information into
quantum memory states \(\{\ket{\sigma_j}\}\) associated with internal
configurations of the model~\cite{gu2012quantum, binder2018practical, Liu_2019}. The simulation
proceeds via a joint evolution of the memory system and an output register,
implemented by an isometry (equivalently, a unitary on a larger space).
Starting from an initial memory state, one applies this evolution and
measures the output register to obtain the next symbol \(x\), leaving the
memory updated to the post-measurement state conditioned on that outcome.
Repeated application of this procedure generates sequences with the same
statistics as the target process. Such a procedure has been experimentally-realised in multiple platforms~\cite{palsson2017experimentally, ghafari2019dimensional, ghafari2019interfering, wu2023implementing}.

More generally, a quantum hidden Markov model (QHMM) ~\cite{monras2010hidden, Monras16_Quantum, Elliott_2021, adhikary2021quantum, Fanizza24_Quantum, zonnios2025quantum, riechers2025identifiability} on an alphabet
\(\mathcal X\) can be specified by the tuple
\((\mathcal X,\mathcal H,\rho_0,\{\mathcal E_x\}_{x\in\mathcal X})\), where
\(\rho_0\) is a density operator on the memory Hilbert space \(\mathcal H\)
and \(\{\mathcal E_x\}\) is a quantum instrument: each \(\mathcal E_x\) is
completely positive and trace non-increasing, and
\(\sum_x \mathcal E_x\) is trace preserving. The probability of a word
\(w=x_1\cdots x_L\) is
\begin{equation}
\Pr(w)=\mathrm{Tr}\!\left(\mathcal E_{x_L}\circ\cdots\circ
\mathcal E_{x_1}(\rho_0)\right).
\end{equation}

The quantum statistical memory \(C_q\) is defined as the von Neumann
entropy of the stationary memory state,
\begin{equation}
C_q = S(\rho) = - \mathrm{Tr}(\rho \log_2 \rho), \quad
\rho = \sum_j \pi_j \ket{\sigma_j}\bra{\sigma_j},
\end{equation}
where \(\pi_j\) is the stationary probability of internal configuration
\(j\). Many processes admit models for which \(C_q \le C_{\mu}\), demonstrating a quantum
memory advantage, and for some families the separation between \(C_q\) and
\(C_\mu\) can be unbounded~\cite{garner2017provably, aghamohammadi2017extreme, elliott2018superior, elliott2019memory, Elliott_2020_extreme, elliott2021quantum, elliott2024embedding}.
Finite-dimensional quantum models can be constructed directly from the
classical HMM~\cite{Elliott_2021, riechers2025identifiability}.

\subsection{Matrix product states for deterministic models}

A stochastic process $\mathcal P$ over a finite alphabet $\mathcal X$ can be
encoded into a translationally invariant quantum state via its \emph{$q$-sample}
\cite{Yang_2018}. Informally, the $q$-sample is the infinite-chain limit of the
pure state whose computational-basis amplitudes are square roots of classical
word probabilities. For a length-$L$ block $x_{t:t+L}:=x_t x_{t+1}\cdots x_{t+L-1}$,
one considers
\begin{equation}
  \ket{P(X_{t:t+L})}
  := \sum_{x_{t:t+L}\in\mathcal X^L} \sqrt{P(x_{t:t+L})}\,\ket{x_{t:t+L}},
\end{equation}
and takes $L\to\infty$ in a manner compatible with stationarity.

When the $q$-sample admits an infinite matrix product state (iMPS)
representation with site tensors $\{A^x\}_{x\in\mathcal X}$ and transfer matrix
\begin{equation}
  E = \sum_{x\in\mathcal X} A^x \otimes (A^x)^*,
  \label{eq:transfer_background}
\end{equation}
all finite-block measurement statistics are obtained from a \emph{double-layer}
contraction. Concretely, if the leading eigenvalue of $E$ is nondegenerate,
then boundaries are irrelevant in the thermodynamic limit and one may express
the block distribution in terms of the leading left and right eigenmatrices
$V_l$ and $V_r$ of $E$ as \cite{Yang_2018}
\begin{equation}
  P_{\mathrm{MPS}}(x_{t:t+L})
  = \mathrm{Tr}\!\left(
  A^{x_{t+L-1}\dagger}\cdots A^{x_t\dagger}\, V_l\,
  A^{x_t}\cdots A^{x_{t+L-1}}\, V_r
  \right).
  \label{eq:pmps_double_layer}
\end{equation}
Equation \eqref{eq:pmps_double_layer} is the appropriate probability-extraction
formula for $q$-sample iMPS, and it replaces any single-layer expression of the
form $\langle l|A^{x_t}\cdots A^{x_{t+L-1}}|r\rangle$.

For  unifilar HMMs (in particular, $\varepsilon$-machines),
Yang \emph{et al.} show that the choice
\begin{equation}
  A^x_{s's} = \sqrt{T^x_{s's}}
  \label{eq:sqrt_site_matrices}
\end{equation}
fully represents the process, in the sense that
$P_{\mathrm{MPS}}(x_{t:t+L}) = P(x_{t:t+L})$ for all $L$, and moreover that the
process is ergodic if and only if the transfer matrix $E$ has a nondegenerate
leading eigenvalue \cite{Yang_2018}. In addition, the canonical form of an iMPS
can be constructed systematically from $V_l$ and $V_r$ via factorizations
$V_r=W_rW_r^\dagger$ and $V_l=W_l^\dagger W_l$ followed by an SVD of
$W_lW_r$ to obtain Schmidt coefficients and canonical tensors \cite{Yang_2018}.

In this work, we will use Eq.~\eqref{eq:pmps_double_layer} as the operative link
between iMPS tensors and classical block statistics, and we explicitly gauge-fix
our tensors into canonical form before interpreting them as Kraus operators.
A compatible canonical-form framework and completeness relations are reviewed in
Ref.~\cite{yang2024dimensionreductionquantumsampling} (Appendix A), where left-
and right-canonical tensors satisfy $\sum_x (A_l^x)^\dagger A_l^x = \mathbb I$
and $\sum_x A_r^x (A_r^x)^\dagger = \mathbb I$, enabling a channel (instrument)
interpretation of the site tensors \cite{yang2024dimensionreductionquantumsampling}.

\subsection{Difficulties with non-deterministic HMMs}

Direct application of MPS methods to non-deterministic HMMs encounters
significant challenges. A naive construction might attempt to use the same
element-wise square-root mapping \(T^{x} \mapsto A^{x}\) as above. For
non-deterministic models this typically yields a transfer operator \(E\) that
is not primitive: it may have multiple dominant eigenvalues or leading
eigenvectors that are not strictly positive. The resulting iMPS is non-normal,
so its canonical form is either not unique or poorly conditioned. Without
normality, standard truncation methods and tangent-space based infinite MPS
algorithms become inapplicable or numerically unstable, and straightforward
singular-value truncations along the virtual bonds often produce suboptimal
approximations~\cite{yang2024dimensionreductionquantumsampling}.

From the perspective of quantum implementations, one can represent general
non-deterministic processes using matrix product density operators (MPDOs) or
mixed-state tensor networks \cite{Elliott_2021, Banchi_2024, srinivasan2020quantumtensornetworksstochastic, srinivasan2017LearningHQMM}. These constructions are more flexible but less
suited to controlled dimension reduction, and they do not directly provide the
normal iMPS structure required by existing compression schemes. In the
remainder of this work we address these difficulties by constructing, for any
finite non-deterministic HMM, a dilated deterministic process whose
\(q\)–sample iMPS is normal by construction. This provides the starting point
for stable variational compression and for the compressed representations that
we analyse in the following sections.

\section{Theoretical Results}
\label{sec:method}

Our aim is to start from a general finite-state, discrete-alphabet, stationary
HMM and obtain a normal infinite matrix product state (iMPS) that can be
compressed by standard tensor-network methods. The overall pipeline has four
steps:
(i) dilate the original HMM to a deterministic process on an augmented output
alphabet;
(ii) construct its \(q\)–sample iMPS;
(iii) variationally truncate this iMPS to a reduced bond dimension; and
(iv) reconstruct an effective model for the original outputs and quantify the
approximation quality using the co-emission divergence rate (CDR).

Throughout we assume that the original HMM is finite-state, finite-alphabet,
stationary and ergodic.

\subsection{Dilation to a deterministic process}
\label{subsec:dilation}

Let \((\mathcal S,\mathcal X,\{T^{x}\})\) be an HMM for a stationary, ergodic
process \(\mathcal P\). In general this model need not be deterministic: for a
given state \(s\) and symbol \(x\), there may be several possible successor
states \(s'\) with \(T^{x}_{s's} > 0\).

We quantify this local branching by
\begin{equation}
k(s,x) = \bigl|\{\,s' \in \mathcal S : T^{x}_{s's} > 0\,\}\bigr|
\end{equation}
and define
\begin{equation}
d_y = \max_{s\in\mathcal S,\,x\in\mathcal X} k(s,x),
\end{equation}
so that \(1 \le d_y \le |\mathcal S|\). We introduce an auxiliary output
alphabet \(\mathcal Y = \{y_1,\dots,y_{d_y}\}\) and choose a labelling
function
\begin{equation}
f: \mathcal S \times \mathcal S \times \mathcal X \to \mathcal Y,
\end{equation}
which assigns a label \(y = f(s,s',x)\) to each allowed transition
\((s\to s')\) emitting symbol \(x\). The only constraint on \(f\) is that it
be injective for each fixed \((s,x)\):
\begin{equation}
s' \neq s'' \;\Longrightarrow\; f(s,s',x) \neq f(s,s'',x)
\quad \forall\, s\in\mathcal S,\; x\in\mathcal X.
\label{eq:f_injective}
\end{equation}
Such a labelling always exists, because \(d_y\) is at least as large as the
maximal branching \(k(s,x)\).

The dilated model has the same hidden-state space \(\mathcal S\) and a
composite output alphabet \(\mathcal X\times\mathcal Y\). Its transition
tensor is
\begin{equation}
T^{(x,y)}_{s's} =
\begin{cases}
T^{x}_{s's}, & \text{if } T^{x}_{s's} > 0 \text{ and } y = f(s,s',x),\\[4pt]
0, & \text{otherwise.}
\end{cases}
\label{eq:dilated_T}
\end{equation}

This dilation has three key structural properties:

\begin{itemize}
  \item \emph{Determinism:} for each state \(s\) and composite symbol
  \((x,y)\) there is at most one successor state \(s'\) with
  \(T^{(x,y)}_{s's}>0\).

  \item \emph{Preservation of observable statistics:} marginalising over the
  auxiliary outputs \(y\) reproduces the same finite-length block
  probabilities on \(\mathcal X\) as the original process \(\mathcal P\).

  \item \emph{Ergodicity:} if the original HMM is ergodic, then the dilated
  HMM is also ergodic, since the underlying Markov chain on \(\mathcal S\),
  obtained by summing over outputs, is unchanged.
\end{itemize}

Proofs are collected in Appendix~\ref{app:bounds}. The dilation therefore
yields a finite-state, stationary, ergodic and deterministic representation of
\(\mathcal P\) on the augmented alphabet \(\mathcal X\times\mathcal Y\).

\subsection{Constructing the \(q\)–sample iMPS}
\label{subsec:qsample}

Given the deterministic dilated HMM
\((\mathcal S,\mathcal X\times\mathcal Y,\{T^{(x,y)}\})\), we construct its
\(q\)–sample iMPS~\cite{Yang_2018} by taking the element-wise principal square
root of the transition tensor,
\begin{equation}
A^{(x,y)}_{s's} = \sqrt{T^{(x,y)}_{s's}} ,
\label{eq:A_from_T}
\end{equation}
for all \(s,s'\in\mathcal S\) and \((x,y)\in\mathcal X\times\mathcal Y\).
Treating the composite symbol \((x,y)\) as a single physical index, these
tensors define a translationally invariant iMPS on an infinite chain with
physical dimension \(d_{\mathrm{phys}} = |\mathcal X|\,|\mathcal Y|\).

The probability of a finite block
\((x_1,y_1),\dots,(x_L,y_L)\) is obtained by contracting the iMPS with
stationary boundary data, for example in the standard transfer-matrix form.
By construction and by the preservation property above, the marginal
distribution over \((x_1,\dots,x_L)\) coincides with that of the original
process \(\mathcal P\).

The corresponding transfer operator for the iMPS is
\begin{equation}
E = \sum_{x\in\mathcal X}\sum_{y\in\mathcal Y}
A^{(x,y)} \otimes \bigl(A^{(x,y)}\bigr)^{*},
\label{eq:transfer_dilated}
\end{equation}
where \({}^{*}\) denotes complex conjugation. Since the dilated HMM is
finite-state, stationary, ergodic and deterministic, the associated
\(q\)–sample iMPS has a \emph{primitive} transfer operator with a unique
leading eigenvalue \(\eta_0 = 1\) and strictly positive left and right
eigenvectors~\cite{Yang_2018}. The iMPS is therefore \emph{normal}, and
blocking a finite number of sites yields an injective MPS with a unique
canonical form and minimal bond dimension
\cite{perezgarcia2007matrixproductstaterepresentations}. In practical
computations we bring \(\{A^{(x,y)}\}\) into mixed canonical form using
standard gauge-fixing procedures.

The overall construction, from the original transition tensor through dilation
to the site tensors of the iMPS and the associated transfer operators, is
illustrated schematically in Fig.~\ref{fig:mps_pipeline}.

\begin{figure}[tb]
  \centering
    \includegraphics[]{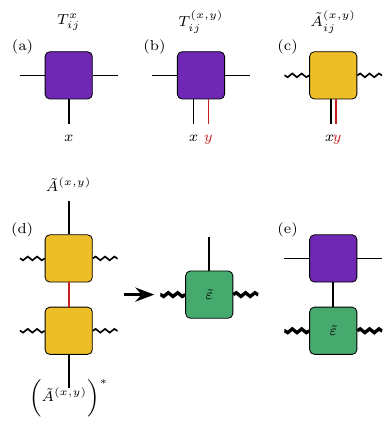}
  \caption{Tensor network representations of stochastic models and their compression. (a) A transition matrix $T^x_{ij}$ of a HMM can be represented as a rank-3 tensor; analogously, (b) the transition matrix of the dilated process $T^{(x,y)}_{ij}$ can be represented by a rank-4 tensor, as can (c) its element-wise square root $A^{(x,y)}_{ij}$. An  array of these form an injective iMPS when the two output legs are grouped, enabling the application of iMPS compression methods to deduce a dimension-reduced iMPS with site tensors $\tilde{A}_{ij}^{(x,y)}$. The (d) transfer matrix of the reduced iMPS defines a quantum channel $\tilde{\mathcal{E}}$ when a Kronecker-$\delta$ is applied to the two visible output legs, implementing a compressed model of the original process.  This can then (e) be paired with the original process to efficiently calculate their CDR.}
  \label{fig:mps_pipeline}
\end{figure}
% --- your figure environment unchanged here ---

\subsection{Variational truncation at reduced bond dimension}
\label{subsec:truncation}

The iMPS defined by \(\{A^{(x,y)}\}\) is an exact representation of the
dilated process, with bond dimension equal to the number of internal states
\(d_s = |\mathcal S|\) of the original HMM. To obtain a more compact model we
seek an approximate iMPS with a reduced bond dimension \(\tilde d < d_s\)
that is as close as possible, in the many-body sense, to the original.

We adopt a uniform tangent-space variational approach for iMPS with fixed bond
dimension~\cite{vanderstraeten_tangent-space_2019,yang2024dimensionreductionquantumsampling}.
Given the normal iMPS \(\{A^{(x,y)}\}\) with bond dimension \(d_s\), we
project it onto the manifold of translationally invariant iMPS with bond
dimension \(\tilde d\) by solving the associated optimisation problem for a
new set of tensors \(\{\tilde A^{(x,y)}\}\). In practice this is implemented
as an effective eigenvalue problem in the tangent space of the target
manifold, solved iteratively using standard MPS routines.

Each iteration involves contractions of the transfer operator \(E\) with
trial tensors of bond dimension \(\tilde d\). We need never form \(E\) explicitly
as a dense \(d_s^2\times d_s^2\) matrix; instead, we apply it as a linear map
with cost that scales polynomially in \(d_s\), \(\tilde d\) and the physical
dimension \(d_{\mathrm{phys}} = |\mathcal X|\,|\mathcal Y|\). \rev{The auxiliary
alphabet size satisfies \(|\mathcal Y| = d_y \le |\mathcal S|\), so in the worst
case the local physical dimension of the dilated iMPS can grow substantially
compared to the original alphabet size \(|\mathcal X|\). This enlarged local
dimension is the computational price paid for obtaining a deterministic
presentation whose \(q\)-sample admits a normal iMPS, and it can become the
dominant bottleneck in the variational truncation step for highly branching
models. A conservative dense upper bound for a single local transfer-operator
application is therefore of order \(\mathcal O(|\mathcal X|\,|\mathcal Y|\,d_s^2\tilde d^2)\),
with the precise prefactor depending on the chosen gauge and eigensolver
implementation. In practice, however, the effective cost is governed by the
number of allowed transitions inherited from the original
HMM, rather than by the full product \(|\mathcal X|\,|\mathcal Y|\); exploiting
this sparsity replaces the dense worst-case factor \(|\mathcal X|\,|\mathcal Y|\)
by a cost scaling linearly in the nonzero tensor contributions. For sparse
learned HMMs this can be substantially smaller than the worst-case scaling.} The outcome is
a compressed normal iMPS, specified by tensors \(\{\tilde A^{(x,y)}\}\), that
approximates the original dilated process while retaining translational
invariance. For algorithmic details of how the variational truncation approach for iMPS is performed, we refer the reader to Ref.~\cite{vanderstraeten_tangent-space_2019}, with its application to quantum compression of deterministic HMMs discussed in Ref.~\cite{yang2024dimensionreductionquantumsampling}

\subsection{Reconstructing an effective model}
\label{subsec:recons_model}

The truncated tensors $\{\tilde A^{(x,y)}\}$ define a translationally invariant
iMPS approximation to the dilated $(x,y)$-process. From this we can construct a compressed quantum model of the dilated process, from which we can in turn construct a dimension-reduced QHMM of the original process by coarse-graining over the dilation.

A crucial technical point is that Kraus (instrument) normalisation is a gauge
property of the iMPS representation. Following standard iMPS canonicalisation,
we first gauge-fix the truncated tensors into a canonical form obtained from
the leading left and right eigenmatrices of the truncated transfer operator
\cite{Yang_2018,yang2024dimensionreductionquantumsampling}. In left-canonical
form the site tensors satisfy the completeness relation
\begin{equation}
  \sum_{x\in\mathcal X}\sum_{y\in\mathcal Y}
  \bigl(\tilde A^{(x,y)}\bigr)^\dagger \tilde A^{(x,y)} = \mathbb I,
  \label{eq:left_canonical_completeness_xy}
\end{equation}
and hence define a completely positive trace-preserving (CPTP) map on the bond
space \cite{yang2024dimensionreductionquantumsampling}. This is the sense in
which the truncated iMPS specifies a physically valid sequential quantum
generator.

We now group the Kraus operators by the observed symbol $x$ by summing over the
auxiliary label $y$. For each $x\in\mathcal X$ define the completely positive
(CP) map
\begin{equation}
  \tilde{\mathcal E}_x(\rho)
  := \sum_{y\in\mathcal Y}
  \tilde A^{(x,y)}\,\rho\,\tilde A^{(x,y)\dagger}.
  \label{eq:cp_map_qhmm_rewrite}
\end{equation}
Equation \eqref{eq:left_canonical_completeness_xy} implies that the
unconditional channel $\tilde{\mathcal E}:=\sum_x \tilde{\mathcal E}_x$ is trace preserving.
Let $\tilde{\rho}_\star$ denote its stationary state, i.e., a fixed point satisfying
\begin{equation}
  \tilde{\rho}_\star = \sum_{x\in\mathcal X}\tilde{\mathcal E}_x(\tilde{\rho}_\star).
  \label{eq:rho_star_fixed_point_rewrite}
\end{equation}
Then $\{\tilde{\mathcal E}_x\}$ forms a quantum instrument that generates a stationary
process $\tilde{\mathcal P}$ on $\mathcal X$ via the usual update rule:
conditioned on observing $x_t$, the bond state updates as
\begin{equation}
  \tilde{\rho}_{t+1}
  = \frac{\tilde{\mathcal E}_{x_t}(\tilde{\rho}_t)}{\mathrm{Tr}\,\tilde{\mathcal E}_{x_t}(\tilde{\rho}_t)},
  \qquad
  \Pr_{\tilde{\mathcal P}}(x_t)=\mathrm{Tr}\,\tilde{\mathcal E}_{x_t}(\tilde{\rho}_t).
\end{equation}
Consequently, the probability of a word $w=x_1\cdots x_L$ is
\begin{equation}
  \Pr_{\tilde{\mathcal P}}(w)
  = \mathrm{Tr}\!\left(
  \tilde{\mathcal E}_{x_L}\circ\cdots\circ\tilde{\mathcal E}_{x_1}(\tilde{\rho}_\star)
  \right).
  \label{eq:word_prob_qhmm_rewrite}
\end{equation}

The dimension-reduced QHMM is then specified by the tuple $(\mathcal X,\tilde{\mathcal H},\tilde{\rho}_0,\{\tilde{\mathcal E}_x\}_{x\in\mathcal X})$, where $\tilde{\mathcal H}$ is the Hilbert space of the memory of the compressed QHMM, and $\tilde{\rho}_0$ is the initial state (generally, taken to be the steady-state $\tilde{\rho}_\star$).

\subsection{Quantifying distance from exact process}
\label{subsec:reconstruct_cdr}

For notational convenience we pass to the Liouville representation. Assign column-stacking
vectorisation $\mathrm{vec}$ defined on matrix units by
\begin{equation}
  \mathrm{vec}(\ket{i}\!\bra{j}) := \ket{i}\otimes\ket{j}.
\end{equation}
With this convention one has the standard identity
\begin{equation}
  \mathrm{vec}(A\rho B) = (A\otimes B^{\mathsf T})\,\mathrm{vec}(\rho),
  \label{eq:vec_convention}
\end{equation}
where ${}^{\mathsf T}$ denotes transpose in the computational basis.
Write $\kk{\rho}:=\mathrm{vec}(\rho)\in\mathbb C^{\tilde d^2}$.
Then the CP map $\tilde{\mathcal E}_x$ in Eq.~\eqref{eq:cp_map_qhmm_rewrite} is represented
by a linear operator $G^{(x)}$ via
\begin{equation}
  \kk{\rho'} = G^{(x)}\kk{\rho},
  \qquad
  G^{(x)} = \sum_{y\in\mathcal Y}
  \tilde A^{(x,y)} \otimes \bigl(\tilde A^{(x,y)}\bigr)^{*},
  \label{eq:Gx_liouville}
\end{equation}
where ${}^{*}$ denotes complex conjugation.
The trace functional is $\bb{\omega}:=\kk{\mathbb I}^{\dagger}$, so that
$\bb{\omega}\kk{\rho}=\mathrm{Tr}(\rho)$. Hence the triple
$(\mathcal X,\kk{\rho_\star},\{G^{(x)}\})$ defines a finite-state linear generator -- a generalised HMM (GHMM)~\cite{upper1997theory, riechers2025identifiability} -- where for any word $w=x_1\cdots x_L$,
\begin{equation}
  \Pr_{\smash{\tilde{\mathcal P}}}(w)
  = \bb{\omega}\, G^{(x_L)}\cdots G^{(x_1)} \kk{\rho_\star}.
  \label{eq:ghmm_form}
\end{equation}

The operators \(G^{(x)}\) need not be elementwise nonnegative or stochastic;
the requirement is simply that Eq.~\eqref{eq:ghmm_form} yields valid word
probabilities.

To quantify how well \(\tilde{\mathcal P}\) approximates the original process
\(\mathcal P\), we use the co-emission divergence rate (CDR)~\cite{Yang_2020}.
Let \(P\) and \(Q\) be stationary, ergodic processes described by finite-state
linear generators (either HMMs or GHMMs) \(\{L^x\}\) and \(\{\hat L^x\}\). We define the
associated transfer operators on the product state spaces by
\begin{align}
  E_P   &:= \sum_{x\in\mathcal X} L^x \otimes L^x, \nonumber\\
  E_Q   &:= \sum_{x\in\mathcal X} \hat L^x \otimes \hat L^x, \nonumber\\
  E_{PQ}&:= \sum_{x\in\mathcal X} L^x \otimes \hat L^x.
\end{align}

and let \(\mu_P\), \(\mu_Q\), and \(\mu_{PQ}\) denote their leading eigenvalues. The CDR is then given by
\begin{equation}
  R_C(P,Q)
  = -\frac{1}{2}\log_2 \frac{\mu_{PQ}}{\sqrt{\mu_P \mu_Q}}.
  \label{eq:cdr_def_ghmm}
\end{equation}
While calculated from GHMM representations, this expression depends only on the induced word distributions, valid for any stationary and
ergodic processes~\cite{Yang_2020}.

In our setting we take \(P\) to be the original HMM process \(\mathcal P\) on
\(\mathcal X\), with generators \(L^x = T^x\), and \(Q\) to be the compressed
process \(\tilde{\mathcal P}\) generated by \(\hat L^x = G^{(x)}\) as defined above. Their
CDR, \(R_C(\mathcal P,\tilde{\mathcal P})\), is the main observable figure of
merit we report in our numerics.

\subsection{Analytic bounds via fidelity divergence rates}
\label{subsec:bound}

The pipeline above produces, for each target bond dimension \(\tilde d\), a
compressed process \(\tilde{\mathcal P}\) together with its CDR
\(R_C(\mathcal P,\tilde{\mathcal P})\), which we evaluate numerically from
finite-state generators. Analytically, the dimension-reduction results of
Ref.~\cite{yang2024dimensionreductionquantumsampling} control
\emph{fidelity-type} errors of the underlying \(q\)--sample iMPS. We
therefore derive rigorous upper bounds on a \(\mathcal X\)-level
\emph{classical fidelity divergence rate} (CFDR), obtained from the dilated
\(q\)--samples by data processing. This CFDR is a different, but closely analogous,
distinguishability measure from the CDR \(R_C\) used in our plots.

Let \(\{\lambda_k\}_{k=1}^{d_s}\) denote the \emph{bond spectrum} of the
dilated \(q\)--sample iMPS across a single cut, i.e.\ the eigenvalues of the
stationary bond state \(\rho_\star\) in canonical form (equivalently, squared
Schmidt singular values), ordered so that
\(\lambda_1 \ge \lambda_2 \ge \cdots \ge \lambda_{d_s}\) and
\(\sum_k \lambda_k = 1\).
For a truncated bond dimension \(\tilde d\) define the discarded tail weight
\begin{equation}
  \varepsilon_{\tilde d}
  := \sum_{k > \tilde d} \lambda_k .
\end{equation}

We write \(R_F(\ket{P_{xy}},\ket{\tilde P_{xy}})\) for the quantum fidelity
divergence rate (QFDR) between the exact and truncated \((x,y)\)-dilated
\(q\)--sample states, and \(R_F(\mathcal P,\tilde{\mathcal P})\) for the
CFDR on \(\mathcal X\), defined by
\begin{equation}
  R_F(\mathcal P,\tilde{\mathcal P})
  := -\lim_{L\to\infty}\frac{1}{2L}\log_2
  \sum_{\vec x\in\mathcal X^L}\sqrt{P^{(L)}(\vec x)\,\tilde P^{(L)}(\vec x)} .
  \label{eq:cfdr_def}
\end{equation}
By applying dephasing in the \((x,y)\) basis followed by tracing out the
auxiliary register \(Y\), fidelity monotonicity implies the data-processing
bound
\begin{equation}
  R_F(\mathcal P,\tilde{\mathcal P})
  \;\le\;
  R_F(\ket{P_{xy}},\ket{\tilde P_{xy}}),
\end{equation}
proved in Appendix~\ref{app:bounds}. Ref.~\cite{yang2024dimensionreductionquantumsampling}
then implies that, for sufficiently small \(\varepsilon_{\tilde d}\), there
exists a constant \(c>0\) (depending on the local physical dimension and the
spectral gap of the transfer operator) such that
\begin{equation}
  R_F(\mathcal P,\tilde{\mathcal P})
  \;\le\; c\,\varepsilon_{\tilde d}.
  \label{eq:cfdr_tail_bound}
\end{equation}

The tail weight can in turn be bounded in terms of the bond entropy
\begin{equation}
  H(\lambda)
  := -\sum_k \lambda_k \log_2 \lambda_k ,
\end{equation}
giving (Appendix~\ref{app:bounds})
\begin{equation}
  \varepsilon_{\tilde d} \;\le\; \frac{H(\lambda)}{\log_2 \tilde d}
  \qquad (\tilde d \ge 2),
  \label{eq:tail_entropy_bound}
\end{equation}
and hence an entropy-based guarantee
\begin{equation}
  R_F(\mathcal P,\tilde{\mathcal P})
  \;\le\;
  c\,\frac{H(\lambda)}{\log_2 \tilde d}.
  \label{eq:cfdr_entropy_bound}
\end{equation}

Finally, we connect \(H(\lambda)\) to a simple algebraic quantity depending
on the dilation. Let \(K\) be the ``slice'' matrix formed by horizontally
concatenating all nonzero site tensors \(A^{(x,y)}\), viewed as linear maps
on the bond space,
\begin{equation}
  K = \bigl[\,A^{(x_1,y_1)}\;\; A^{(x_2,y_2)}\;\; \cdots \,\bigr] .
\end{equation}
The column space of \(K\) contains the support of the stationary bond state,
so \(\mathrm{rank}(K)\) upper-bounds the number of nonzero eigenvalues of
\(\rho_\star\), and therefore
\begin{equation}
  H(\lambda) \;\le\; \log_2 \operatorname{rank} K .
\end{equation}
Substituting into Eq.~\eqref{eq:cfdr_entropy_bound} yields the structural,
label-aware bound
\begin{equation}
  R_F(\mathcal P,\tilde{\mathcal P})
  \;\le\;
  c\,\frac{\log_2 \operatorname{rank} K}{\log_2 \tilde d}.
  \label{eq:cfdr_rank_bound}
\end{equation}

In the numerical results below we report the CDR \(R_C\), because it is
efficiently computable from finite-state generators\rev{: evaluating
Eq.~\eqref{eq:cdr_def_ghmm} requires only the leading eigenvalues of the
product-space transfer operators, which is a standard sparse-eigenvalue
problem}. The bounds above instead provide rigorous
certificates for the classical fidelity divergence rate \(R_F\)\rev{, a
Bhattacharyya-type overlap measure,} that is
directly controlled by iMPS truncation theorems. \rev{Both \(R_C\) and
\(R_F\) vanish if and only if the two processes are identical, but they are
distinct measures of process discrepancy and are not in general proportional.}
Without additional assumptions one should not interpret these as direct upper bounds on \(R_C\).
\rev{Thus, the analytic results provide rigorous worst-case control for the
CFDR, while the CDR serves as the main empirical diagnostic in our numerics.}

\section{Examples and Analysis}
\label{sec:results}

We now demonstrate and analyse the dilation--compression methodology on two examples.
First, we consider a \emph{Tunable Nondeterministic Source} (TNS), a generalisation of a non-deterministic generator~\cite{marzen2015informational} that has previously been used to exhibit quantum
reductions in memory \cite{Elliott_2021}. Here we use it as a controlled test bed
for how compression performance depends on internal parameters and on the
entanglement structure induced by the dilation. Second, we apply the method to a
hidden Markov model trained on a real speech dataset and compare the compressed
representations to a standard classical reduction baseline.

\subsection{Illustrative model: $N$-state simple non-deterministic source}
\label{subsec:sns}

As a controlled test bed we use a generalised $N$-state TNS with internal states
$\mathcal S=\{0,1,\dots,N-1\}$ and a single parameter
$p\in(0,1)$ tuning the transition structure. For generic $p$ and $N\geq2$ the model
is non-deterministic: conditioning on the emitted symbol does not uniquely fix
the successor state. The transition structure is shown schematically in
Fig.~\ref{fig:nstate-sns}.

\begin{figure}[t]
  \centering
   \includegraphics[width=\columnwidth,height=0.85\textheight,keepaspectratio]{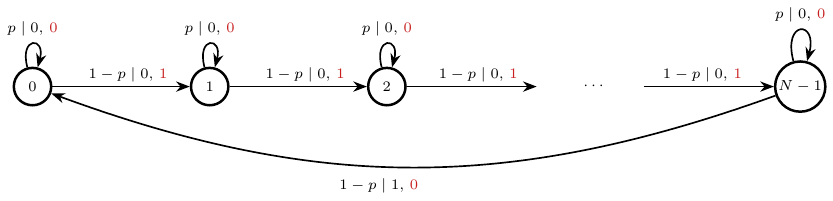}

  \caption{HMM representation of the generalised $N$-state tunable non-deterministic source (TNS).
  Transition labels show `probability $\mid$ emitted symbol', with the auxiliary label symbol in red.}
  \label{fig:nstate-sns}
\end{figure}

Starting from this TNS HMM we construct the dilated deterministic model according
to Sec.~\ref{subsec:dilation}, build its $q$--sample iMPS as in
Sec.~\ref{subsec:qsample}, and perform variational truncation at a sequence of
reduced bond dimensions $\tilde d$ following Sec.~\ref{subsec:truncation}. For
each choice of $(N,p)$ and target $\tilde d$ we reconstruct the effective model
for the original alphabet and compute the co-emission divergence rate
$R_C(\mathcal P,\tilde{\mathcal P})$ between the original and compressed
processes, as described in Sec.~\ref{subsec:reconstruct_cdr}.

Figure~\ref{fig:sns_cdr_vs_d} shows the resulting CDR as a function of the
truncated bond dimension $\tilde d$ for several representative parameter
choices. Each curve corresponds to a fixed pair $(N,p)$. In all cases the CDR
decreases systematically as $\tilde d$ increases, indicating that the
variational procedure produces families of compressed models that converge
toward the original process as more virtual resources are allowed. The rate of
decrease depends strongly on both $N$ and $p$: some parameter regimes reach a
given CDR at much smaller $\tilde d$ than others, reflecting differences in
how strongly the process resists low-dimensional compression. This behaviour is
also in line with the analytic picture of Sec.~\ref{subsec:bound}, where the
loss induced by truncation is controlled by the tail of the bond spectrum of
the (dilated) iMPS.

\begin{figure}[tb]
    \centering
    \includegraphics[width=\columnwidth]{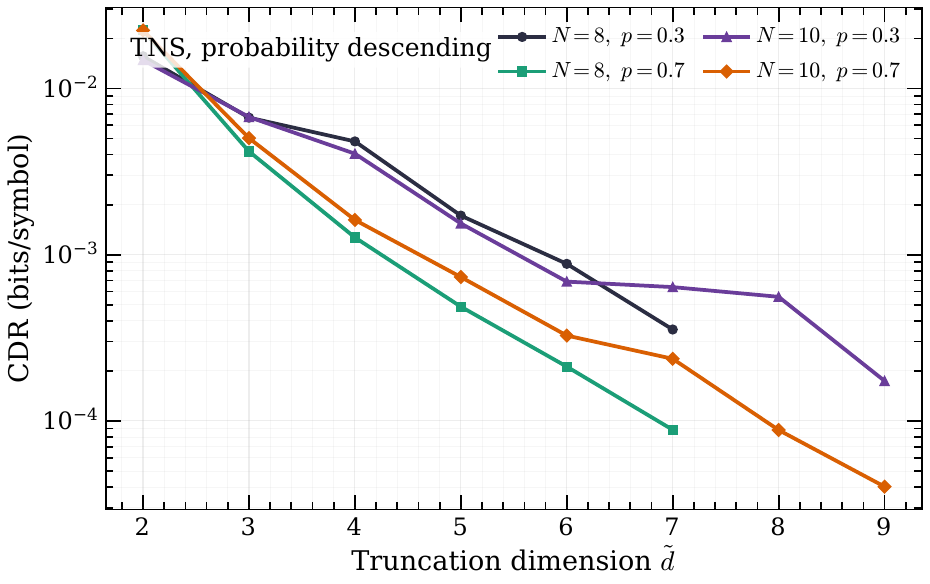}
    \caption{TNS model compression results: co-emission divergence rate $R_C$ between the original process and the
    reconstructed effective model as a function of truncated bond dimension $\tilde d$.
    Each curve corresponds to a different choice of the number of states $N$ and internal parameter $p$.}
    \label{fig:sns_cdr_vs_d}
\end{figure}

\begin{figure}[tb]
    \centering
    \includegraphics[width=\columnwidth]{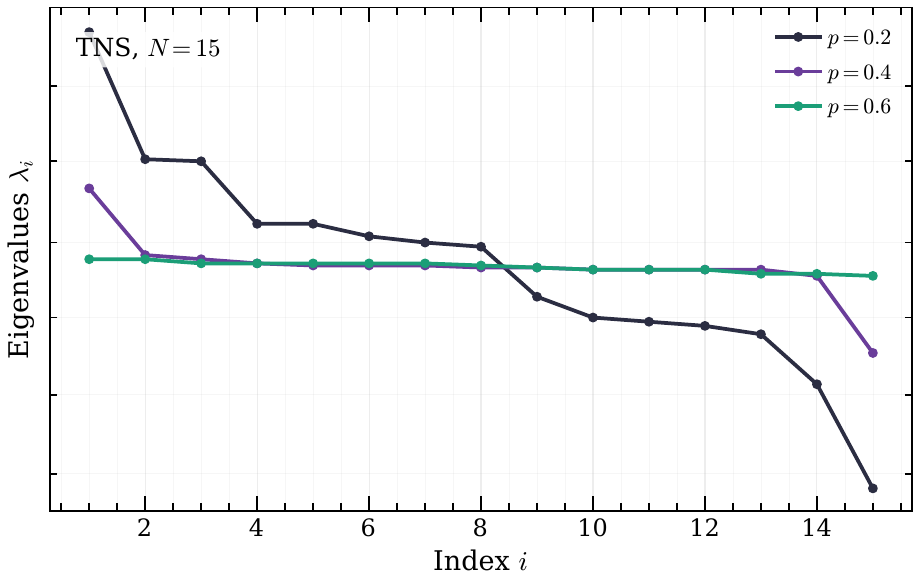}
    \caption{Schmidt spectra of the dilated iMPS of the TNS for fixed $N = 15$ and varied $p$, ordered by decreasing magnitude. The shape of the specra can vary, but parameter regimes with faster-decaying spectra are expected to be more amenable to truncation to small bond dimension.}
    \label{fig:sns_schmidt}
\end{figure}
To understand this dependence more structurally, we examine the Schmidt spectra
of the dilated iMPS before truncation. For each $(N,p)$ we bring the iMPS into
canonical form and compute the singular values across a single bond, which
quantify the bipartite entanglement across that cut. Figure~\ref{fig:sns_schmidt}
shows the Schmidt coefficients as a function of their index for a fixed $N$ and varied $p$. The decay patterns vary markedly: some spectra are
rapidly decaying (concentrating most weight into a few leading modes), while
others are much flatter. This also motivates the use of the variational method, as standard SVD truncation methods which work by discarding singular values work poorly on flat distributions.

Comparing Figs.~\ref{fig:sns_cdr_vs_d} and~\ref{fig:sns_schmidt}, one finds that
parameter choices with more rapidly decaying Schmidt spectra tend to achieve low
CDR at smaller $\tilde d$, while flatter spectra require larger $\tilde d$ to
reach comparable accuracy. This is consistent with the intuition that the
entanglement encoded in the dilated iMPS is a practical proxy for the
structural complexity that survives the dilation and resists compression.

\subsection{Application to a speech-derived HMM}
\label{subsec:speech}

To assess the practical utility of the method we apply it to a hidden Markov
model trained on a subset of the Google Speech Commands dataset
\cite{warden2018speech}. Each audio clip is mapped to a sequence of acoustic
feature vectors using Mel-frequency cepstral coefficients (MFCCs), including
time-derivative features \cite{davis1980mfcc}. We then discretise these
features by vector quantisation using MiniBatch $k$-means, yielding a finite
output alphabet \cite{sculley2010webscale}. Finally, we fit a categorical
(edge-emitting) HMM by maximum likelihood via the expectation-maximisation (or \emph{Baum--Welch}) procedure
\cite{dempster1977em,rabiner1989hmm}. The resulting learned model is strongly
non-deterministic, making it a realistic test case for the dilation step.

\begin{figure}[tb]
    \centering
    \includegraphics[width=\columnwidth]{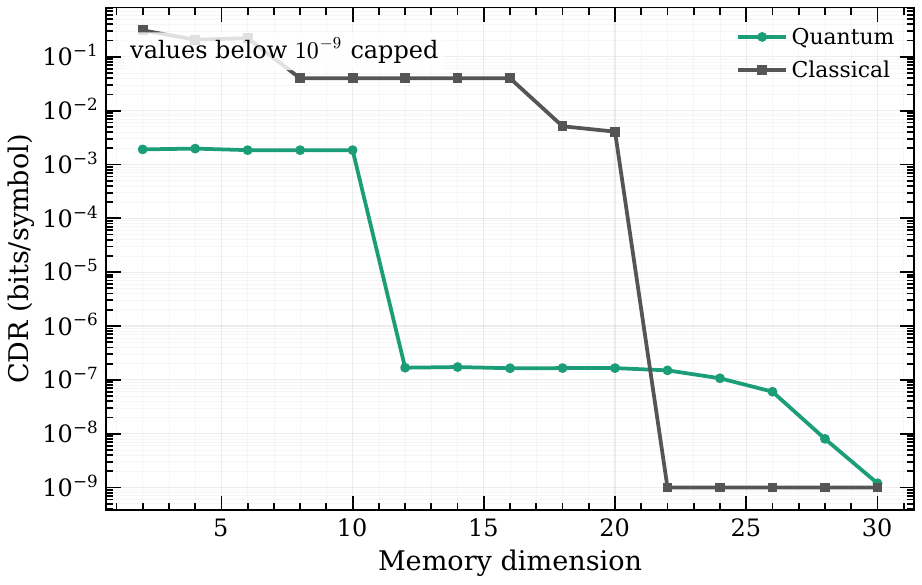}
    \caption{Speech-derived HMM compression results: comparison of quantum dimension reduction and a classical greedy state-merging baseline. Horizontal axis reports memory dimension as classical state count or quantum bond dimension. Vertical axis reports CDR $R_C$ (bits per symbol,
    log scale). Values below $10^{-9}$ are displayed at $10^{-9}$ for visibility.}
    \label{fig:speech_quantum_vs_classical}
\end{figure}

We take this trained HMM as the target process $\mathcal P$ and apply our dilation-compression pipeline to obtain a compressed quantum model of the process. To benchmark against a classical reduction pipeline, we also perform greedy
state merging directly on the original learned HMM. In each merge step we
choose the pair of states that minimises a stationary-weighted KL objective
defined on the one-step predictive distributions
$p_s(x,s') = T^{x}_{s's}$, and we update the merged state by a simple
stationary-mixture lumping rule. We then evaluate the CDR between the
original HMM and each merged HMM at the corresponding state counts.

We emphasise that this merging routine is intended as a transparent and
computationally inexpensive \emph{baseline} rather than a best-in-class classical
reduction method. It is greedy, uses a local one-step objective, and does not
optimise sequence-level divergences such as the CDR directly; more sophisticated
classical reductions (e.g.\ global objectives, multi-step lookahead, or spectral
methods) may achieve stronger performance in some regimes. The purpose of the
baseline is to provide a concrete reference curve against which the effect of
the dilation--compression pipeline can be assessed on the same learned model.

Figure~\ref{fig:speech_quantum_vs_classical} compares the quantum reduction curve
(memory dimension $\tilde d$) with the classical merging curve (retained HMM state count).
In the strongly compressed regime (memory dimension $\lesssim 20$), the quantum models
achieve substantially smaller CDR, often by orders of magnitude, than this greedy classical
baseline at the same nominal memory dimension. Notably, the quantum curve exhibits a sharp
improvement around $\tilde d\approx 12$, consistent with the variational truncation needing only few states to capture the dominant
modes of the dilated iMPS. By contrast, the classical baseline only approaches the numerical
floor once the retained state count is sufficiently large (here around $\gtrsim 22$ states),
suggesting that a moderate number of learned states can be merged with negligible sequence-level
distortion, whereas more aggressive classical reductions rapidly degrade performance.
\rev{We note that this comparison is against one specific classical baseline on the same learned model; more sophisticated classical reduction methods may narrow the gap in some regimes.}

\subsection{Impact of the labelling function}

\label{subsec:labelling}

The dilation procedure of Sec.~\ref{subsec:dilation} contains a genuine degree
of freedom: the choice of the labelling function
$f:\mathcal S\times\mathcal S\times\mathcal X\to\mathcal Y$ used to assign
auxiliary outputs to transitions. Different admissible labellings (all satisfying
the injectivity constraint) can induce different entanglement structures in the
dilated iMPS. Since truncation performance is sensitive to the bond spectrum, it
is natural to ask how strongly the choice of $f$ affects compressibility in
practice.

To probe this, we fix a TNS instance and compare several simple labelling
strategies that differ only in how they assign the auxiliary symbols $y$ to the
allowed transitions, while keeping the auxiliary alphabet size $|\mathcal Y|$
fixed. For each choice of $f$ we construct the corresponding dilated iMPS,
perform variational truncation across a range of bond dimensions $\tilde d$,
reconstruct the effective models, and compute the CDR.

Figure~\ref{fig:labelling_cdr}(a) shows $R_C(\mathcal P,\tilde{\mathcal P})$ as a
function of $\tilde d$ for the different labelling strategies. The curves differ
markedly: some labellings yield low CDR already at small $\tilde d$, while
others require substantially larger bond dimensions to reach comparable
accuracy. In this sense the labelling is not merely cosmetic; it acts as an
optimisation parameter that can materially change the achievable
CDR--$\tilde d$ trade-off. This observation also matches the structural picture
in Sec.~\ref{subsec:bound}, where label-dependent quantities (such as the slice
structure) can upper-bound entropic tails and hence influence error guarantees.

\begin{figure}[tb]
  \centering
    \includegraphics[width=\columnwidth]{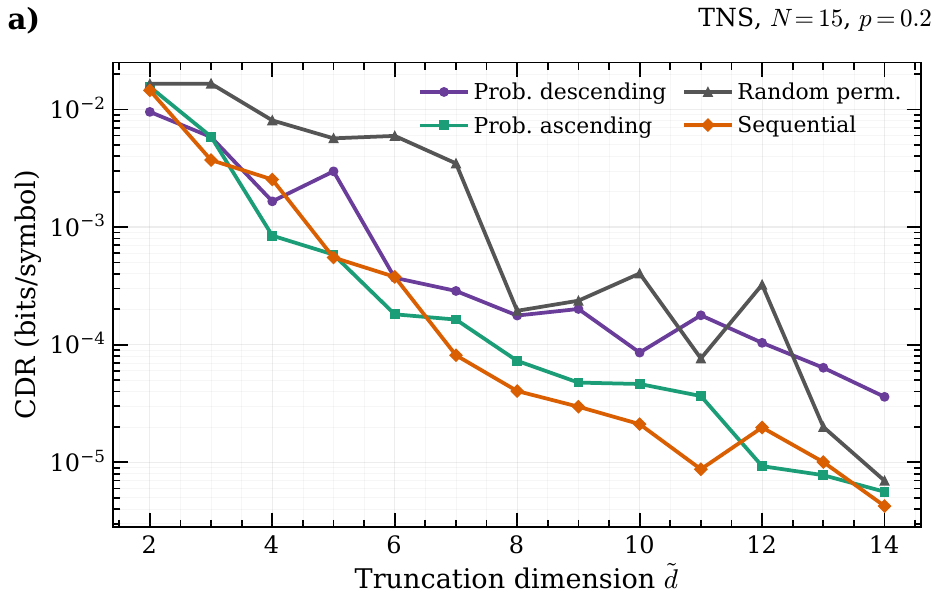}

\includegraphics[width=\columnwidth]{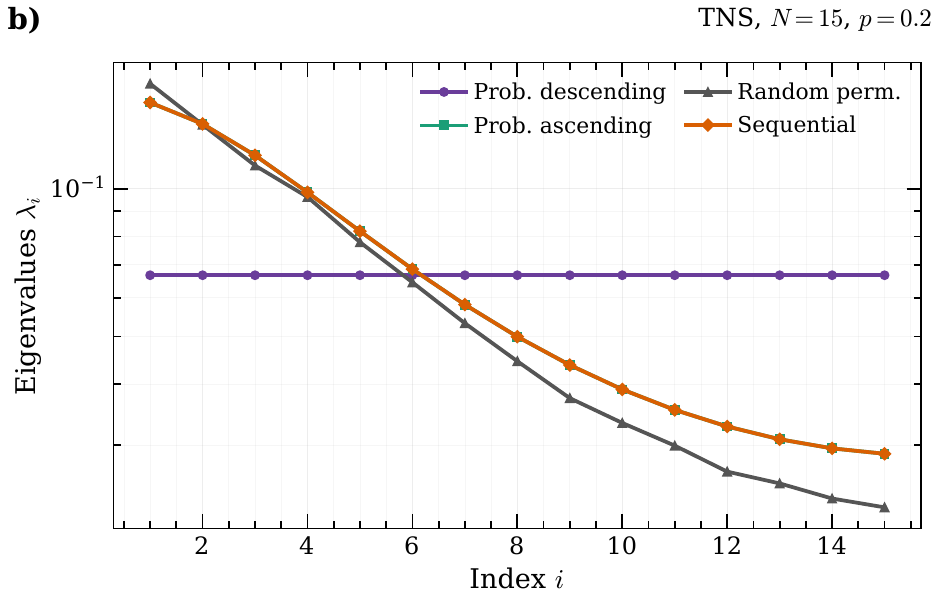}

  \caption{(a) Impact of labelling function on CDR for TNS model  with fixed parameters $(N,p)$. Each curve corresponds to a different strategy for assigning auxiliary labels $y$ to transitions. (b) Schmidt spectra of the dilated iMPS for each of the above labelling strategies; different assignments of auxiliary labels $y$ produce distinct  entanglement spectra that correlate with compression performance.
}
  \label{fig:labelling_cdr}
\end{figure}

The corresponding Schmidt spectra of the dilated iMPS for the same labellings
are shown in Fig.~\ref{fig:labelling_cdr}(b). Different assignments of $y$
produce distinct decay patterns. Labellings with more rapidly decaying spectra
tend to yield the lowest CDR at small $\tilde d$, while flatter spectra yield
poorer compression performance, reinforcing the role of the induced entanglement
structure as a practical predictor of compressibility.
\rev{Structurally, different admissible labellings all preserve the original HMM
exactly before truncation, but they redistribute the nonzero transition
probabilities across the $(x,y)$ slices of the dilated tensor. This
reorganisation changes the column structure of the slice matrix $K$ and hence
the Schmidt spectrum of the dilated iMPS, which in turn controls the
bond-dimension-versus-error trade-off after truncation. The specific labelling
heuristics used in our numerical study are documented in
Appendix A.}

A full optimisation of $f$ over all admissible assignments is a nontrivial
combinatorial problem and is beyond the scope of the present work. The examples
here show, however, that even simple heuristic labellings can significantly
change the practical trade-off between bond dimension and CDR, suggesting that
learning or optimising good labelling strategies is an important direction for
future research.

\section{Discussion and Conclusion}
\label{sec:discussion}

We introduced a general framework for mapping non-deterministic HMMs to matrix product state representations, enabling techniques for their implementation with compressed quantum models to be applied. We achieve with a dilation procedure that takes any finite, stationary, ergodic
HMM to a deterministic model on an enlarged output alphabet that preserves all
observable statistics on the original symbols. The $q$--sample of this dilated
process is representable by a normal iMPS with bond dimension $d_s = |\mathcal S|$, which can be
compressed using standard uniform-MPS tangent-space methods. From the compressed
tensors we reconstruct quantum model of the original process, with reduced memory dimension.

We applied this pipeline to a TNS toy model and to a
speech-derived HMM learned from data. In both cases we obtain families of
compressed models that interpolate smoothly between the original process and
low-bond approximations. Empirically the CDR decreases monotonically with
$\tilde d$, and there is a broad regime in which $\tilde d \ll d_s$ while
$R_C(\mathcal P,\tilde{\mathcal P})$ remains small. This supports the practical
value of the dilation--compression pipeline as a dimension-reduction tool for
practical quantum compression of HMMs

The examples also clarify how compressibility depends on the entanglement
structure of the dilated $q$--sample. For the TNS family, different choices of
the internal parameter $p$ produce markedly different Schmidt spectra: rapidly
decaying spectra admit accurate truncations at small $\tilde d$, while flatter
spectra require larger bond dimensions to achieve comparable CDR. This supports
the view that entanglement in the dilated iMPS is a useful proxy for structural
complexity that resists compression.

\rev{We emphasise that the present work establishes a general compression
route for arbitrary finite ergodic HMMs and demonstrates it on representative
examples, rather than providing a comprehensive large-scale benchmarking study.
The computational cost of the pipeline is dominated by the variational
truncation step, whose per-iteration cost scales with the bond dimensions
$d_s$ and $\tilde d$ and the effective physical dimension of the dilated iMPS.
A conservative dense estimate for the local contractions entering one
transfer-operator application is \(\mathcal O(|\mathcal X|\,|\mathcal Y|\,d_s^2\tilde d^2)\),
so because the auxiliary dilation can enlarge the physical dimension up to
$|\mathcal X|\,|\mathcal Y|$ with $|\mathcal Y| \le |\mathcal S|$, this cost
can grow substantially for highly branching models with many states. In
practice, the sparsity of the original transition tensor provides a
significant mitigation, since only nonzero tensor contributions enter the
contraction cost, effectively replacing the worst-case factor
$|\mathcal X|\,|\mathcal Y|$ by the number of allowed transitions.
Nevertheless, scaling the pipeline to HMMs with hundreds
or thousands of states will require careful exploitation of sparsity and
potentially approximate contraction schemes, and we regard this as an
important direction for future engineering of the method.}

The most immediate extensions of this work are practical. First, the labelling
function $f$ is a genuine design degree of freedom: our results show that it
can reshape the entanglement structure of the dilated $q$--sample and
substantially change the achievable CDR at fixed $\tilde d$. This suggests an
algorithmic direction in which the dilation is not treated as a fixed
preprocessing step, but as something to be chosen (or learned) to produce
dilations that are intrinsically easier to compress.

Second, the dilation--compression pipeline provides a systematic route from
large learned HMMs to small quantum samplers with an explicit and interpretable
resource parameter $\tilde d$. This opens the door to using tensor-network
diagnostics (bond spectra, entropies, canonical form data) as practical
predictors of which learned models are likely to admit aggressive compression
without significantly distorting sequence statistics.

\rev{It is also useful to situate this work relative to tensor-network
Born-machine approaches~\cite{han2018unsupervised,Stoudenmire2017,glasser2019expressive}, which also use squared
amplitudes of matrix product states to define probability distributions.
Born-machine methods are typically framed as direct generative learning from
data, without requiring an intermediate classical model. By contrast, our
pipeline takes a given HMM as input and compresses it into a lower-dimensional
quantum sequential generator with explicit structural control over the
approximation error. These are parallel approaches with different modelling
emphases. Training compressed quantum sequential models directly from raw
data, bypassing the intermediate HMM, is an interesting open direction that
is more naturally aligned with Born-machine methodology than with the present
post-training compression setting. It would be interesting in future work to
explore whether the compressed generators produced by our pipeline could also
serve as informed initialisations for subsequent data-driven optimisation.}

\rev{Finally, while we reported CDR because it is operationally meaningful and efficiently computable from finite-state generators, the analytic truncation guarantees established here apply to a fidelity-type divergence rate for the dilated \(q\)-sample, which induces a corresponding classical fidelity-type quantity after dephasing and marginalisation. There is, to our knowledge, no known general theorem directly linking these two quantities: they are based on different notions of process discrepancy, and the analytic results should therefore not be interpreted as direct bounds on CDR. At the same time, in the examples studied here they exhibit similar qualitative behaviour, which is why CDR remains a useful empirical diagnostic while preserving a clear distinction from the analytically controlled quantity.}

%\begin{acknowledgments}
\begin{acknowledgments}
This work was funded by the University of Manchester Dame Kathleen Ollerenshaw Fellowship.
\end{acknowledgments}

%\end{acknowledgments}

\bibliography{apssamp} % or refs, if you rename the .bib

\appendix

\section{Properties of the dilation}
In this Appendix we collect the technical arguments underlying the structural
properties of the dilation We work throughout with the dilated
$q$–sample iMPS for the process over $(x,y)$ and its compressed version at
bond dimension $\tilde d$, as constructed in
Secs.~\ref{subsec:qsample} and~\ref{subsec:truncation}.

We start by recording the basic structural properties of the dilation
defined in Sec.~\ref{subsec:dilation}. Recall that the original HMM
$(\mathcal S,\mathcal X,\{T^{x}\})$ is mapped to a dilated HMM
$(\mathcal S,\mathcal X\times\mathcal Y,\{T^{(x,y)}\})$ with transition
tensor
\begin{equation}
T^{(x,y)}_{s's} =
\begin{cases}
T^{x}_{s's}, & \text{if } T^{x}_{s's} > 0 \text{ and } y = f(s,s',x),\\[4pt]
0, & \text{otherwise,}
\end{cases}
\end{equation}
where the labelling function
$f:\mathcal S\times\mathcal S\times\mathcal X\to\mathcal Y$ is assumed to be
injective in its second argument for each fixed $(s,x)$.

\begin{lemma}[Deterministicity of the dilation]
For each state $s\in\mathcal S$ and composite symbol
$(x,y)\in\mathcal X\times\mathcal Y$ there is at most one successor
$s'\in\mathcal S$ with $T^{(x,y)}_{s's} > 0$.
\end{lemma}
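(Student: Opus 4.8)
The plan is to argue directly from the piecewise definition of the dilated transition tensor in Eq.~\eqref{eq:dilated_T}, with essentially the entire content resting on the injectivity property \eqref{eq:f_injective} of the labelling function $f$. Since the statement is a uniqueness claim, I would set it up as a short proof by contradiction: fix an arbitrary state $s\in\mathcal S$ and an arbitrary composite symbol $(x,y)\in\mathcal X\times\mathcal Y$, and suppose that there exist \emph{two} distinct successors $s'\neq s''$ with both $T^{(x,y)}_{s's}>0$ and $T^{(x,y)}_{s''s}>0$.

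The key step is to read off the necessary conditions imposed by the case definition \eqref{eq:dilated_T}. By that definition, $T^{(x,y)}_{s's}>0$ can only arise through the first branch, which requires simultaneously $T^{x}_{s's}>0$ and $y=f(s,s',x)$; likewise $T^{(x,y)}_{s''s}>0$ forces $y=f(s,s'',x)$. Chaining these two equalities through the common value $y$ gives $f(s,s',x)=f(s,s'',x)$. This is precisely the collision that the injectivity constraint \eqref{eq:f_injective} forbids for $s'\neq s''$, so we reach a contradiction and conclude $s'=s''$. Because $s$ and $(x,y)$ were arbitrary, at most one successor exists for every such pair, which is exactly the determinism claim.

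The main obstacle here is essentially bookkeeping rather than mathematics: the substantive work was done upstream in establishing that an injective labelling $f$ always exists (guaranteed because $d_y\ge k(s,x)$ for all $s,x$). Given that, the lemma is an immediate consequence, and the only care required is to track correctly that positivity of the dilated weight forces the auxiliary label to equal $f(s,\cdot,x)$, so that two positive entries in the same $(s,x,y)$ slice would demand a label collision. I would also note that the argument uses no numerical information about the $T^{x}_{s's}$ beyond their support, so determinism is a purely combinatorial feature of the dilation that is independent of the specific transition probabilities.
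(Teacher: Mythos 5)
Your proof is correct and follows essentially the same route as the paper's: both arguments rest on the observation that positivity of $T^{(x,y)}_{s's}$ forces $y=f(s,s',x)$ via the case definition in Eq.~\eqref{eq:dilated_T}, so two distinct positive successors would violate the injectivity constraint \eqref{eq:f_injective}. You phrase it as a contradiction while the paper states it directly, but this is a cosmetic difference only.
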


\begin{proof}
Fix $s$ and $x$. For each $s'$ with $T^{x}_{s's}>0$ the construction assigns
a single label $y=f(s,s',x)$. The injectivity of $f$ for fixed $(s,x)$ implies
that $s'\neq s''$ gives $f(s,s',x)\neq f(s,s'',x)$. Hence for any pair
$(x,y)$ there is at most one $s'$ such that $T^{(x,y)}_{s's}>0$.
\end{proof}

\begin{lemma}[Preservation of observable statistics]
Let $\mathcal P$ be the process on $\mathcal X$ generated by the original
HMM and let $\mathcal P_{\mathrm{dil}}$ be the process on $\mathcal X$ obtained
from the dilated HMM by marginalising over $\mathcal Y$. Then for all block
lengths $L$ and strings $(x_1,\dots,x_L)$,
\begin{equation}
\Pr_{\mathcal P}(x_1,\dots,x_L)
=
\Pr_{\mathcal P_{\mathrm{dil}}}(x_1,\dots,x_L).
\end{equation}
\end{lemma}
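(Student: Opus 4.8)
The plan is to reduce everything to a single marginalisation identity at the level of transition tensors and then propagate it through the matrix-product expression for word probabilities. First I would establish that summing the dilated tensor over the auxiliary alphabet returns the original tensor exactly,
\begin{equation}
\sum_{y\in\mathcal Y} T^{(x,y)}_{s's} = T^{x}_{s's}
\qquad \forall\, s,s'\in\mathcal S,\; x\in\mathcal X.
\end{equation}
This follows immediately from the definition in Eq.~\eqref{eq:dilated_T}: for a fixed allowed transition $(s,s',x)$ with $T^{x}_{s's}>0$, the label $y=f(s,s',x)$ is the unique value for which $T^{(x,y)}_{s's}$ is nonzero, and it carries the full weight $T^{x}_{s's}$, while if $T^{x}_{s's}=0$ every term vanishes. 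I would emphasise that this step uses only that $f$ is single-valued as a function, and \emph{not} the injectivity constraint of Eq.~\eqref{eq:f_injective}; injectivity is what secures determinism in the previous lemma, but it plays no role in the statistics-preservation claim here.

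Next I would write both word probabilities in transfer-matrix form. Reading the transition tensors as matrices $(T^{x})_{s's}$ acting on the state space, the original process gives
\begin{equation}
\Pr_{\mathcal P}(x_1,\dots,x_L) = \mathbf e^{\mathsf T}\, T^{x_L}\cdots T^{x_1}\,\pi,
\end{equation}
where $\pi$ is the stationary distribution over $\mathcal S$ and $\mathbf e^{\mathsf T}$ is the all-ones row vector that marginalises the final state. For the dilated model marginalised over $\mathcal Y$, the corresponding expression sums over all auxiliary strings,
\begin{equation}
\Pr_{\mathcal P_{\mathrm{dil}}}(x_1,\dots,x_L)
= \sum_{y_1,\dots,y_L}\mathbf e^{\mathsf T}\, T^{(x_L,y_L)}\cdots T^{(x_1,y_1)}\,\tilde\pi .
\end{equation}
Because matrix multiplication is multilinear and the sum over each $y_i$ touches only the single factor $T^{(x_i,y_i)}$, the $y$-sums factorise across the product, giving $\sum_{y_1,\dots,y_L} T^{(x_L,y_L)}\cdots T^{(x_1,y_1)} = \bigl(\sum_{y_L}T^{(x_L,y_L)}\bigr)\cdots\bigl(\sum_{y_1}T^{(x_1,y_1)}\bigr) = T^{x_L}\cdots T^{x_1}$ by the marginalisation identity above.

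The only remaining point is to verify that the two expressions use the same boundary data, i.e.\ $\tilde\pi=\pi$. The underlying Markov chain of the dilated model is obtained by summing over \emph{all} outputs, $\sum_{x,y}T^{(x,y)} = \sum_x T^x$, which is precisely the underlying chain of the original HMM. By the standing ergodicity assumption this chain admits a unique stationary distribution, so $\tilde\pi=\pi$, and the two word probabilities coincide for every $L$ and every string. I expect no genuine obstacle in this argument; the only things requiring care are the bookkeeping of the multilinear factorisation and the conceptual point that preservation needs single-valuedness of $f$ but not its injectivity, the latter being reserved for the determinism property rather than the statistics-preservation property proved here.
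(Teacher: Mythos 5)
Your proof is correct and follows essentially the same route as the paper's: the one-step marginalisation identity $\sum_{y\in\mathcal Y} T^{(x,y)}_{s's} = T^{x}_{s's}$ is the key step in both, and your explicit transfer-matrix factorisation is simply a rigorous rendering of the paper's phrase ``iterating this equality along the chain.'' The two refinements you add beyond the paper's argument --- the explicit verification that the dilated boundary data satisfies $\tilde\pi=\pi$ via uniqueness of the stationary distribution under ergodicity, and the observation that preservation needs only single-valuedness of $f$ while injectivity is reserved for the determinism lemma --- are both correct and make the argument slightly more complete than the paper's own.
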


\begin{proof}
For any $s,s'$ and $x$ we have
\begin{equation}
\sum_{y\in\mathcal Y} T^{(x,y)}_{s's}
= \sum_{y: y=f(s,s',x)} T^{x}_{s's}
= T^{x}_{s's},
\end{equation}
since at most one $y$ contributes. Thus the conditional distribution of
$(S_{t+1},X_t)$ given $S_t$ is unchanged when we marginalise
over $Y_t$. Iterating this equality along the chain yields equality of all
finite-length block probabilities on $\mathcal X$.
\end{proof}

\begin{lemma}[Ergodicity of the dilation]
If the original HMM is stationary and ergodic, then the dilated HMM is also
stationary and ergodic.
\end{lemma}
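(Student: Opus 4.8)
The plan is to reduce the ergodicity of the dilated HMM to the ergodicity of the original HMM by observing that both models induce \emph{the same} underlying Markov chain on the hidden-state space $\mathcal S$. Ergodicity of a finite-state HMM is a property of this hidden-state chain: we need a unique stationary distribution and irreducibility/aperiodicity (equivalently, a primitive state-to-state transition matrix). So the strategy is to show that the state-transition matrix obtained by summing out all outputs is identical for the two models, and then invoke ergodicity of the original.

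Concretely, I would first define the row-stochastic (or column-stochastic, matching the paper's convention) state-transition matrix of the original HMM by marginalising over the emitted symbol,
\begin{equation}
M_{s's} = \sum_{x\in\mathcal X} T^{x}_{s's} = \Pr(S_{t+1}=s'\mid S_t=s),
\end{equation}
and the analogous matrix $\tilde M$ of the dilated HMM by summing over the composite alphabet,
\begin{equation}
\tilde M_{s's} = \sum_{x\in\mathcal X}\sum_{y\in\mathcal Y} T^{(x,y)}_{s's}.
\end{equation}
The key step is then the identity $\tilde M = M$. This follows immediately from the summation already established in the proof of the preservation lemma, namely $\sum_{y} T^{(x,y)}_{s's} = T^{x}_{s's}$, because summing that equality over $x$ gives $\tilde M_{s's} = \sum_x T^x_{s's} = M_{s's}$. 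Thus the dilation leaves the hidden-state dynamics completely untouched; only the labelling of transitions changes.

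With $\tilde M = M$ in hand, the conclusion is essentially automatic. Since the original HMM is ergodic, $M$ is a primitive stochastic matrix with a unique stationary distribution $\pi$ satisfying $\pi = M\pi$ (or $\pi^{\mathsf T} = \pi^{\mathsf T} M$, per convention) and, by Perron--Frobenius, a nondegenerate leading eigenvalue $1$ with strictly positive eigenvector. Because $\tilde M = M$, the dilated chain has the \emph{same} stationary distribution $\pi$ and inherits primitivity verbatim. Stationarity of the dilated HMM follows because $\pi$ is preserved under the marginal dynamics, and ergodicity follows because irreducibility and aperiodicity of the state chain are exactly the conditions certifying ergodicity of the HMM. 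I would note briefly that the dilation only refines, and never removes or merges, allowed transitions (each $(s\to s')$ with $T^x_{s's}>0$ maps to a unique labelled transition with identical weight), so no connectivity is lost or created at the hidden-state level.

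I do not anticipate a genuine obstacle here; the only point requiring a little care is bookkeeping about \emph{which} notion of ergodicity is being used. The cleanest route is to phrase everything in terms of the hidden-state transition matrix and reduce to a standard Perron--Frobenius statement, rather than arguing directly about the joint state--output chain, since the output alphabet is precisely what the dilation alters. If one instead wanted ergodicity of the joint process on $\mathcal S\times(\mathcal X\times\mathcal Y)$, one would observe that each joint state is reachable exactly when the corresponding $(s,x)$-transition is allowed in the original model, so reachability on the composite level is again inherited from $M$; but invoking the state-chain version is shorter and suffices for the stationary, time-translation-invariant descriptions used throughout the paper.
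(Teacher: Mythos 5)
Your proposal is correct and follows essentially the same route as the paper: both establish $\sum_{x,y} T^{(x,y)}_{s's} = \sum_x T^{x}_{s's}$, so the hidden-state Markov chain (and hence its stationary distribution, irreducibility, and aperiodicity) is unchanged by the dilation. Your version simply spells out the Perron--Frobenius bookkeeping that the paper leaves implicit.
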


\begin{proof}
The underlying Markov chain on $\mathcal S$, obtained by summing over
outputs, is the same in both models:
\begin{equation}
\sum_{x,y} T^{(x,y)}_{s's}
= \sum_x T^{x}_{s's}.
\end{equation}
Irreducibility and aperiodicity of this chain are therefore inherited from
the original HMM, and the stationary distribution over $\mathcal S$ carries
over unchanged. Hence the dilated HMM is stationary and ergodic.
\end{proof}

\begin{lemma}[Square-root tensors reproduce word statistics for unifilar generators]
\label{lem:sqrtT_correct_statistics}
Let $(\mathcal S,\mathcal A,\{T^{a}\}_{a\in\mathcal A})$ be a finite-state,
edge-emitting \emph{unifilar} HMM for a stationary process on alphabet
$\mathcal A$ (in our application $\mathcal A=\mathcal X\times\mathcal Y$).
Write the transition probabilities as
$T^{a}_{s's}=\Pr(S_{t+1}=s',A_t=a\mid S_t=s)$.
Assume the underlying hidden-state Markov chain
$P:=\sum_{a\in\mathcal A} T^{a}$ is ergodic with stationary distribution
$\pi$.

Define site matrices $A^{a}\in\mathbb R^{|\mathcal S|\times|\mathcal S|}$ by
\begin{equation}
A^{a}_{s's} := \sqrt{T^{a}_{s's}}.
\end{equation}
Let $\rho_\star := \mathrm{diag}(\pi)$ be the diagonal matrix with entries
$\pi_s$.
Then for every word $w=a_1 a_2 \cdots a_L\in\mathcal A^L$,
\begin{equation}
\Pr_{\mathrm{HMM}}(w)
=
\mathrm{Tr}\!\left(
A^{a_L}\cdots A^{a_1}\,\rho_\star\,
A^{a_1\dagger}\cdots A^{a_L\dagger}
\right).
\label{eq:wordprob_sqrtT}
\end{equation}
Equivalently, the completely positive maps
$\mathcal E_a(\rho):=A^{a}\rho A^{a\dagger}$
define a quantum instrument whose word probabilities from the stationary bond
state $\rho_\star$ agree exactly with the original unifilar HMM.
\end{lemma}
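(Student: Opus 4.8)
The plan is to expand the right-hand side of Eq.~\eqref{eq:wordprob_sqrtT} explicitly and show that the \emph{unifilar} structure is exactly what collapses it onto the classical word probability. First I would set $M:=A^{a_L}\cdots A^{a_1}$, so the quantity to evaluate is $\mathrm{Tr}(M\rho_\star M^\dagger)$. Because every $A^a$ is a real, entrywise-nonnegative matrix (so $A^{a\dagger}=(A^a)^{\mathsf T}$ and all entries equal their own conjugates) and $\rho_\star=\mathrm{diag}(\pi)$ is diagonal, this trace reduces to $\sum_{s_0,s_L}\pi_{s_0}\,M_{s_L s_0}^{\,2}$. The single matrix element is a sum over internal paths, $M_{s_L s_0}=\sum_{s_1,\dots,s_{L-1}}\prod_{t=1}^{L}A^{a_t}_{s_t s_{t-1}}$, with the convention $A^{a_t}_{s_t s_{t-1}}=\sqrt{T^{a_t}_{s_t s_{t-1}}}$.

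The crux is the evaluation of $M_{s_L s_0}^{\,2}$. Written out, this is a double sum over \emph{pairs} of paths, and its cross terms are geometric means $\sqrt{\Pr(\text{path }1)\,\Pr(\text{path }2)}$ of distinct path probabilities rather than genuine probabilities; this interference is precisely what obstructs the naive square-root construction for non-deterministic HMMs. Here I would invoke the unifilar hypothesis: for a fixed initial state $s_0$ and a fixed emitted word $a_1\cdots a_L$, the successor at each step is uniquely determined, so the entire internal path $s_0\to s_1\to\cdots\to s_L$ is fixed and the sum defining $M_{s_L s_0}$ has at most one nonzero term. No cross terms survive, and hence $M_{s_L s_0}^{\,2}=\prod_{t=1}^{L}\bigl(A^{a_t}_{s_t s_{t-1}}\bigr)^2=\prod_{t=1}^{L}T^{a_t}_{s_t s_{t-1}}$, which is exactly the joint probability of the unique state path together with the word $w$, conditioned on $S_0=s_0$.

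Assembling the pieces is then routine: summing $\prod_t T^{a_t}_{s_t s_{t-1}}$ over the terminal state $s_L$ (a trivial sum, again by unifilarity) gives $\Pr(w\mid S_0=s_0)$, and weighting by $\pi_{s_0}=\Pr(S_0=s_0)$ and summing over $s_0$ yields $\Pr_{\mathrm{HMM}}(w)$ by stationarity of $\pi$, establishing Eq.~\eqref{eq:wordprob_sqrtT}. For the ``equivalently'' clause I would identify $\mathrm{Tr}(M\rho_\star M^\dagger)$ with $\mathrm{Tr}(\mathcal E_{a_L}\circ\cdots\circ\mathcal E_{a_1}(\rho_\star))$ for $\mathcal E_a(\rho)=A^a\rho A^{a\dagger}$, and separately check that $\rho_\star$ is a fixed point, $\sum_a\mathcal E_a(\rho_\star)=\rho_\star$, which guarantees the length-$L$ word distribution is normalised: its diagonal entries reproduce $(P\pi)_s=\pi_s$ by stationarity of the hidden chain, while its off-diagonal entries vanish because a single state $s$ cannot have two distinct successors $s'\neq s''$ under one symbol $a$ --- unifilarity once more.

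The main obstacle is conceptual rather than computational, and it is concentrated entirely in the vanishing of the path cross terms in $M_{s_L s_0}^{\,2}$: this is the one place where the unifilar hypothesis is genuinely indispensable, and where the analogous identity would fail for a general HMM (the surviving interference terms being the root cause of the non-normal iMPS discussed in the main text). Everything else --- reality and nonnegativity of the tensors, diagonality of $\rho_\star$, and stationarity of $\pi$ --- is bookkeeping.
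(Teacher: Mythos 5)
Your proof is correct and follows essentially the same route as the paper's: expand $\mathrm{Tr}(M\rho_\star M^\dagger)$ using the diagonality of $\rho_\star$ and the reality of the tensors, invoke unifilarity so that each column of $M$ carries at most one nonzero path contribution --- eliminating all cross terms --- and reassemble $\prod_{t=1}^{L} T^{a_t}_{s_t s_{t-1}}$, summed over $s_L$ and weighted by $\pi_{s_0}$, into the classical word probability. Your explicit check that $\rho_\star$ is a fixed point of $\sum_a \mathcal E_a$ (diagonal entries via $P\pi=\pi$, off-diagonals vanishing again by unifilarity) is a small correct supplement that the paper leaves implicit in calling $\rho_\star$ the stationary bond state.
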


\begin{proof}
Fix a word $w=a_1\cdots a_L$ and write $M_w := A^{a_L}\cdots A^{a_1}$.
Since $\rho_\star$ is diagonal,
\begin{align}
\mathrm{Tr}(M_w \rho_\star M_w^\dagger)
&= \sum_{s_0\in\mathcal S} (\rho_\star)_{s_0 s_0}\,
\sum_{s_L\in\mathcal S} |(M_w)_{s_L s_0}|^2 \nonumber\\
&= \sum_{s_0\in\mathcal S} \pi_{s_0}\,
\sum_{s_L\in\mathcal S} |(M_w)_{s_L s_0}|^2.
\label{eq:trace_expand}
\end{align}

Now use unifilarity: for each current state $s$ and symbol $a$ there is at most
one successor state $\delta(s,a)$ with $T^{a}_{\delta(s,a),s}>0$.
Hence, for fixed $s_0$, there is at most one compatible internal-state path
$s_1,s_2,\dots,s_L$ satisfying $s_t=\delta(s_{t-1},a_t)$ for all $t$.
Therefore each column $s_0$ of $M_w$ has at most one nonzero entry, located at
row $s_L=\delta(\cdots\delta(s_0,a_1)\cdots,a_L)$ when the path exists.
Moreover, along that path,
\begin{equation}
(M_w)_{s_L s_0}
= \prod_{t=1}^L \sqrt{T^{a_t}_{s_t s_{t-1}}}
\quad \Longrightarrow \quad
|(M_w)_{s_L s_0}|^2
= \prod_{t=1}^L T^{a_t}_{s_t s_{t-1}}.
\end{equation}
Summing over $s_L$ in \eqref{eq:trace_expand} therefore removes the final-state
index without introducing cross-terms:
\begin{equation}
\sum_{s_L} |(M_w)_{s_L s_0}|^2
= \prod_{t=1}^L T^{a_t}_{s_t s_{t-1}},
\end{equation}
with the understanding that the product is zero if the path does not exist.
Substituting back into \eqref{eq:trace_expand} gives
\begin{equation}
\mathrm{Tr}(M_w \rho_\star M_w^\dagger)
= \sum_{s_0\in\mathcal S} \pi_{s_0}\,
\prod_{t=1}^L T^{a_t}_{s_t s_{t-1}}.
\end{equation}
The right-hand side is exactly the standard unifilar-HMM expression for the
stationary word probability $\Pr_{\mathrm{HMM}}(w)$ (sum over initial state
weighted by $\pi$, with the internal path fixed by unifilarity). This proves
\eqref{eq:wordprob_sqrtT}.
\end{proof}

\emph{Relation to the $q$-sample construction}
Lemma~\ref{lem:sqrtT_correct_statistics} shows that the square-root tensors
always reproduce the \emph{classical} word statistics for any unifilar
presentation when probabilities are extracted via the induced instrument.
For predictive presentations (e.g.\ $\varepsilon$-machines) this iMPS further
coincides with the $q$-sample state in the sense of Ref.~\cite{Yang_2018};
for general nonpredictive unifilar presentations one should not assume this
stronger identification, even though the measured statistics agree.

\subsection{\rev{Implemented labelling heuristics}}
\label{app:labelling_heuristics}

{

The dilation of Sec.~\ref{subsec:dilation} requires choosing an admissible
labelling function $f$ satisfying the injectivity condition
Eq.~\eqref{eq:f_injective}. All admissible choices produce a deterministic
HMM that exactly reproduces the original process before truncation; they differ
only in how auxiliary labels $y\in\mathcal Y$ are assigned to the nonzero
outgoing transitions from each $(s,x)$ pair, and hence in the tensor structure
seen by the subsequent variational truncation.

In our numerical study (Sec.~\ref{subsec:labelling}) we implement the following
four heuristic strategies. In each case, for a fixed source state $s$ and
emitted symbol $x$, let $s'_1,\dots,s'_{k}$ denote the destination states with
$T^x_{s'_i s}>0$, and let $|\mathcal Y|=d_y$ be the minimal auxiliary
alphabet size.

\begin{enumerate}[label=(\roman*)]
  \item \emph{Sequential.}
  Sort the destination states by state index in ascending order and assign
  auxiliary labels $y=0,1,\dots,k-1$ in that order. This is a deterministic
  baseline that depends only on the state-labelling convention.

  \item \emph{Probability descending.}
  Sort the destination states by their transition probabilities
  $T^x_{s'_i s}$ in decreasing order and assign $y=0$ to the highest-probability
  transition, $y=1$ to the next, and so on.

  \item \emph{Probability ascending.}
  As above, but with the destination states sorted in increasing order of
  transition probability, so that $y=0$ is assigned to the least probable
  transition.

  \item \emph{Random permutation.}
  Assign auxiliary labels according to a pseudorandom permutation of the
  destination states. When a base seed is supplied, the assignment is
  reproducible.
\end{enumerate}

All four strategies use the same minimal auxiliary alphabet size $d_y$ and
produce dilated tensors satisfying $\sum_y T^{(x,y)}_{s's}=T^x_{s's}$ for all
$s,s',x$. They are therefore all exact prior to truncation. As illustrated
in Fig.~\ref{fig:labelling_cdr}, the choice of strategy can materially affect
the Schmidt spectrum of the dilated iMPS and hence the compression performance.
}

\section{Error Bounds}
\label{app:bounds}
\subsection{From quantum fidelity rate to a classical Bhattacharyya rate}
\label{subsec:qfdr_to_rb_app}

Let $\ket{P_{xy}}$ and $\ket{\tilde P_{xy}}$ denote the infinite-chain
$q$--sample states of the exact and truncated dilated processes over
$\mathcal X\times\mathcal Y$. For concreteness we define the quantum fidelity
divergence rate (QFDR) as
\begin{equation}
  R_F(\ket{P_{xy}},\ket{\tilde P_{xy}})
  = -\lim_{L\to\infty} \frac{1}{2L}
    \log_2 F\bigl(\rho_L,\tilde\rho_L\bigr),
\end{equation}
where $\rho_L$ and $\tilde\rho_L$ are the reduced density operators on
$L$ sites and $F(\rho,\sigma)$ is the quantum fidelity.

The following lemma relates QFDR to a classical fidelity divergence rate on $\mathcal X$, defined by the blockwise
Bhattacharyya coefficient
$\sum_{\vec x}\sqrt{P^{(L)}(\vec x)\,\tilde P^{(L)}(\vec x)}$.
\begin{lemma}[Classical fidelity rate bounded by QFDR]
\label{lem:cfdr_qfdr_app}
Let $\mathcal P$ and $\tilde{\mathcal P}$ be the processes on $\mathcal X$
obtained by marginalising the exact and truncated dilated $q$--samples over
$\mathcal Y$. Define the classical fidelity divergence rate
\begin{equation}
  R_F(\mathcal P,\tilde{\mathcal P})
  := -\lim_{L\to\infty}\frac{1}{2L}\log_2
  \sum_{\vec x\in\mathcal X^L}\sqrt{P^{(L)}(\vec x)\,\tilde P^{(L)}(\vec x)} .
\end{equation}
Then
\begin{equation}
  R_F(\mathcal P,\tilde{\mathcal P})
  \;\le\;
  R_F(\ket{P_{xy}},\ket{\tilde P_{xy}}).
\end{equation}
\end{lemma}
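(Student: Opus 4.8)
The plan is to exhibit the passage from the dilated $q$--sample states to the classical word distributions on $\mathcal X$ as a composition of quantum channels, and then invoke monotonicity of the fidelity under such channels. Concretely, I would fix a block length $L$ and let $\rho_L,\tilde\rho_L$ be the reduced density operators of $\ket{P_{xy}}$ and $\ket{\tilde P_{xy}}$ on $L$ contiguous sites, each carrying the composite index $(x,y)$. Define a CPTP map $\Phi$ on these $L$ sites as the composition of (i) complete dephasing in the $(x,y)$ product basis, followed by (ii) the partial trace over the $L$ copies of the auxiliary register $Y$. Dephasing is a (pinching) channel and the partial trace is a channel, so $\Phi$ is completely positive and trace preserving.

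The first key step is to identify the images under $\Phi$. For a $q$--sample the diagonal of the $L$-site reduced density operator in the physical basis equals the stationary block probability, as encoded by the double-layer extraction formula Eq.~\eqref{eq:pmps_double_layer}; hence dephasing sends $\rho_L$ to the diagonal operator with entries $P^{(L)}(\vec x,\vec y)$, and the subsequent partial trace over $Y$ sums these to $\sum_{\vec y}P^{(L)}(\vec x,\vec y)=P^{(L)}(\vec x)$, the marginal on $\mathcal X^L$. By the preservation-of-statistics property of the dilation this marginal is exactly the original process distribution, and the same holds for $\tilde\rho_L$ with $\tilde P^{(L)}$. Thus $\Phi(\rho_L)$ and $\Phi(\tilde\rho_L)$ are the classical distributions $P^{(L)}$ and $\tilde P^{(L)}$ represented as diagonal density operators over $\mathcal X^L$.

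The second step applies data processing. Since fidelity is non-decreasing under CPTP maps, $F(\Phi(\rho_L),\Phi(\tilde\rho_L))\ge F(\rho_L,\tilde\rho_L)$. For commuting (diagonal) density operators the fidelity reduces to the Bhattacharyya overlap, $F(\Phi(\rho_L),\Phi(\tilde\rho_L))=\sum_{\vec x\in\mathcal X^L}\sqrt{P^{(L)}(\vec x)\,\tilde P^{(L)}(\vec x)}$, with the root (non-squared) convention for $F$ fixed so as to match the Bhattacharyya normalisation used in the CFDR definition. Combining the two displays gives, at each finite $L$,
\begin{equation}
\sum_{\vec x\in\mathcal X^L}\sqrt{P^{(L)}(\vec x)\,\tilde P^{(L)}(\vec x)}\;\ge\;F(\rho_L,\tilde\rho_L).
\end{equation}
Applying the monotone map $z\mapsto -\tfrac{1}{2L}\log_2 z$ reverses the inequality, and taking $L\to\infty$ (both limits existing by the definitions of the two rates) yields $R_F(\mathcal P,\tilde{\mathcal P})\le R_F(\ket{P_{xy}},\ket{\tilde P_{xy}})$, as claimed.

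The step I expect to need the most care is the bookkeeping around conventions together with the identification of $\Phi(\rho_L)$ with the classical distribution. One must pin down the fidelity convention (root versus squared) so that the classical image reproduces exactly the Bhattacharyya coefficient appearing in the CFDR, and one should state explicitly that only an inequality, not an equality, is required, so the off-diagonal coherences removed by dephasing need not be tracked beyond recognising $\Phi$ as a genuine channel. The remaining analytic point, existence of the two rate limits, is inherited from the definitions and the normality of the underlying iMPS.
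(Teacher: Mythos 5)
Your proof is correct and takes essentially the same route as the paper's own: the identical CPTP map $\Phi$ (dephasing in the $(x,y)$ product basis followed by partial trace over the auxiliary registers), fidelity monotonicity under channels, and the reduction of the fidelity of the resulting diagonal states to the Bhattacharyya coefficient, before applying $-\tfrac{1}{2L}\log_2(\cdot)$ and taking $L\to\infty$. Your additional bookkeeping---fixing the root (non-squared) fidelity convention and identifying the dephased diagonal with the block distribution via Eq.~\eqref{eq:pmps_double_layer}---only makes explicit what the paper's proof leaves implicit.
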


\begin{proof}
Consider the reduced density operators $\rho_L$ and $\tilde\rho_L$ on $L$
sites. Apply two completely positive trace-preserving (CPTP) maps to both
states: (i) dephasing in the computational basis
$\{\ket{x_1,y_1;\dots;x_L,y_L}\}$; and (ii) partial trace over the auxiliary
outputs $y_1,\dots,y_L$. Let $\Phi$ denote the composition of these maps.
Fidelity is monotone under CPTP maps, so
\begin{equation}
  F(\rho_L,\tilde\rho_L)
  \;\le\;
  F\bigl(\Phi(\rho_L),\Phi(\tilde\rho_L)\bigr).
\end{equation}
The outputs $\Phi(\rho_L)$ and $\Phi(\tilde\rho_L)$ are diagonal density
matrices on $\mathcal X^L$ encoding the classical block distributions
$P^{(L)}$ and $\tilde P^{(L)}$. Their quantum fidelity reduces to the
classical fidelity coefficient
\(\sum_{\vec x}\sqrt{P^{(L)}(\vec x)\tilde P^{(L)}(\vec x)}\).
Taking $-\frac{1}{2L}\log_2(\cdot)$ and passing to $L\to\infty$ yields the
claim.
\end{proof}

\subsection{Truncation and quantum fidelity rate}

Let $\{\lambda_k\}_{k=1}^{d_s}$ be the Schmidt coefficients of the dilated
iMPS across a fixed bond, ordered so that
$\lambda_1 \ge \lambda_2 \ge \cdots \ge \lambda_{d_s}$ and
$\sum_k \lambda_k = 1$. For a target bond dimension $\tilde d$ we define the
discarded tail weight
\begin{equation}
  \varepsilon_{\tilde d}
  := \sum_{k>\tilde d} \lambda_k .
\end{equation}

The effect of discarding Schmidt weight at a single bond of a normal iMPS and
replacing it with a variationally optimal truncated iMPS is analysed in
Ref.~\cite{yang2024dimensionreductionquantumsampling}. We state the relevant
scaling result:

\begin{lemma}[QFDR versus Schmidt tail]
\label{lem:qfdr_tail_app}
For the normal dilated iMPS and its variational truncation at bond dimension
$\tilde d$, there exists a constant $c>0$ (depending on the local physical
dimension and the spectral gap of the transfer operator) such that, for
sufficiently small $\varepsilon_{\tilde d}$,
\begin{equation}
  R_F(\ket{P_{xy}},\ket{\tilde P_{xy}})
  \;\le\;
  c\,\varepsilon_{\tilde d}.
\end{equation}
\end{lemma}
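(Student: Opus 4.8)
The plan is to reduce the statement to an eigenvalue estimate for the \emph{mixed transfer operator} of the two canonical iMPS, and to exploit the normality (primitivity plus spectral gap) of the dilated $q$--sample established in Sec.~\ref{subsec:qsample} to convert a single-bond Schmidt-weight estimate into a genuine per-site rate. The first step is to use variational optimality: the tangent-space routine of Ref.~\cite{vanderstraeten_tangent-space_2019} returns the bond-dimension-$\tilde d$ iMPS that maximises the many-body fidelity with $\ket{P_{xy}}$, so the QFDR of the variational truncation is no larger than that of \emph{any} other fixed bond-$\tilde d$ ansatz. It therefore suffices to bound $R_F$ for the convenient suboptimal choice obtained by projecting the canonical tensors onto the top $\tilde d$ Schmidt vectors at every bond, for which the discarded weight across a single cut is exactly $\varepsilon_{\tilde d}$.

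Next I would express the reduced-density fidelity through transfer operators. Writing the exact and naively-truncated left-canonical tensors as $A_l^{(x,y)}$ and $\tilde A_l^{(x,y)}$, I form the cross operator $E_{P\tilde P}=\sum_{x,y}A_l^{(x,y)}\otimes(\tilde A_l^{(x,y)})^{*}$ together with $E_P$ and $E_{\tilde P}$. Since the dilated iMPS is normal, $E_P$ is primitive with nondegenerate leading eigenvalue $1$ and a finite spectral gap; for small $\varepsilon_{\tilde d}$ the truncated operators are small perturbations, so $E_{P\tilde P}$ likewise has a unique dominant eigenvalue $\Lambda$ separated from the remainder of its spectrum. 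Standard transfer-operator asymptotics then give $F(\rho_L,\tilde\rho_L)=\Lambda^{L}(1+o(1))$ as $L\to\infty$, with subleading corrections controlled by the gap, so that after the $1/L$ normalisation (which annihilates all boundary contributions) one obtains $R_F=-\tfrac{1}{2}\log_2\Lambda$.

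The final step is to estimate the eigenvalue deficit $1-\Lambda$. First-order perturbation theory in the truncation projector, with the spectral gap entering through the reduced resolvent of $E_P$, bounds $1-\Lambda$ by a constant $c>0$ -- depending on the local physical dimension $d_{\mathrm{phys}}=|\mathcal X||\mathcal Y|$ and the gap -- times the single-bond discarded weight $\varepsilon_{\tilde d}$. Expanding $-\tfrac{1}{2}\log_2\Lambda\approx(1-\Lambda)/(2\ln 2)$ for $\Lambda$ near $1$ then yields $R_F(\ket{P_{xy}},\ket{\tilde P_{xy}})\le c\,\varepsilon_{\tilde d}$, and variational optimality transfers the bound from the naive truncation to the actual one.

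The hard part is the middle step: replacing pure-state overlaps by the \emph{mixed}-state fidelity of the $L$-site reduced density operators and showing it really inherits the clean $\Lambda^{L}$ scaling. Mixed fidelity is not multiplicative across a cut, so one cannot simply factorise per-bond overlaps; instead one must sandwich it between overlap-type quantities and control the environment using the finite correlation length. Establishing the perturbative estimate $1-\Lambda=O(\varepsilon_{\tilde d})$ with explicit gap dependence -- rather than the crude additive bound $\bigl\|\,\ket{P_{xy}}-\ket{\tilde P_{xy}}\,\bigr\|^{2}\lesssim L\,\varepsilon_{\tilde d}$, which fails to produce a rate as $L\to\infty$ -- is precisely the content of the truncation theorem of Ref.~\cite{yang2024dimensionreductionquantumsampling}, which I would invoke directly for the dilated normal iMPS since its hypotheses (normality, primitive transfer operator, spectral gap) are guaranteed by the determinism and ergodicity of the dilation.
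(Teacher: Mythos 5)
Your proposal is correct and takes essentially the same route as the paper's own proof, which is itself only a sketch: a gap-controlled perturbation of the normal iMPS that is linear in $\varepsilon_{\tilde d}$, a finite-block fidelity bound, and passage to the thermodynamic limit, with the technically hard core (the mixed-state fidelity scaling) deferred --- exactly as you do --- to the truncation theorem of Ref.~\cite{yang2024dimensionreductionquantumsampling}. Your extra scaffolding (variational optimality reducing to the naive Schmidt projection, and the mixed-transfer-operator eigenvalue deficit $1-\Lambda=O(\varepsilon_{\tilde d})$) is a faithful expansion of that same sketch rather than a genuinely different argument.
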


\begin{proof}[Sketch]
Discarding Schmidt weight $\varepsilon_{\tilde d}$ and constructing the
optimal variational approximation at bond dimension $\tilde d$ perturbs the
iMPS by an amount linear in $\varepsilon_{\tilde d}$ in appropriate norms.
For a normal iMPS with a primitive transfer operator, a finite-block overlap
bound can be derived in terms of $\varepsilon_{\tilde d}$ and the spectral
gap. Passing to the infinite-chain limit yields the stated bound on the
fidelity divergence rate. The full argument is given in
Ref.~\cite{yang2024dimensionreductionquantumsampling}.
\end{proof}

Combining Lemmas~\ref{lem:cfdr_qfdr_app} and~\ref{lem:qfdr_tail_app} yields
\begin{equation}
  R_F(\mathcal P,\tilde{\mathcal P})
  \;\le\;
  c\,\varepsilon_{\tilde d}.
  \label{eq:cfdr_tail_app}
\end{equation}

\subsection{Tail weight and Schmidt entropy}

We now bound the tail weight $\varepsilon_{\tilde d}$ in terms of the Schmidt
entropy
\begin{equation}
  H(\lambda)
  := -\sum_k \lambda_k \log_2 \lambda_k .
\end{equation}

\begin{lemma}[Entropy controls tail]
\label{lem:entropy_tail_app}
Let $\{\lambda_k\}$ be a probability distribution in non-increasing order
and let $\varepsilon_{\tilde d} = \sum_{k>\tilde d}\lambda_k$ for some
integer $\tilde d\ge 2$. Then
\begin{equation}
  \varepsilon_{\tilde d}
  \;\le\;
  \frac{H(\lambda)}{\log_2 \tilde d}.
\end{equation}
\end{lemma}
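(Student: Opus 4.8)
The plan is to exploit the non-increasing ordering of the $\lambda_k$ together with the nonnegativity of each entropy term. First I would observe that, because the distribution is normalised and ordered so that $\lambda_1\ge\lambda_2\ge\cdots$, the first $\tilde d$ weights are each at least $\lambda_{\tilde d}$, whence $\tilde d\,\lambda_{\tilde d}\le\sum_{j=1}^{\tilde d}\lambda_j\le 1$ and therefore $\lambda_{\tilde d}\le 1/\tilde d$. Monotonicity then propagates this to the entire tail: $\lambda_k\le\lambda_{\tilde d}\le 1/\tilde d$ for every index $k>\tilde d$. Taking base-two logarithms gives the pointwise lower bound $-\log_2\lambda_k\ge\log_2\tilde d$ valid on all tail indices.

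Next I would split the Shannon entropy into head and tail contributions. Since the integrand $x\mapsto -x\log_2 x$ is nonnegative on $[0,1]$, discarding the head terms $k\le\tilde d$ can only decrease the sum, so $H(\lambda)\ge\sum_{k>\tilde d}\bigl(-\lambda_k\log_2\lambda_k\bigr)$. Applying the pointwise estimate $-\log_2\lambda_k\ge\log_2\tilde d$ termwise then yields $H(\lambda)\ge(\log_2\tilde d)\sum_{k>\tilde d}\lambda_k=(\log_2\tilde d)\,\varepsilon_{\tilde d}$. Dividing through by $\log_2\tilde d$, which is strictly positive precisely because $\tilde d\ge 2$, delivers the claimed inequality $\varepsilon_{\tilde d}\le H(\lambda)/\log_2\tilde d$.

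There is no genuine analytic obstacle here; the result is a short consequence of ordering plus nonnegativity of the entropy integrand. The only points requiring care are bookkeeping ones: verifying the sharp bound $\lambda_{\tilde d}\le 1/\tilde d$ (rather than settling for the weaker $\lambda_k\le 1$), handling the edge case where some $\lambda_k=0$ via the standard convention $0\log_2 0=0$ so that every summand stays nonnegative, and noting that the hypothesis $\tilde d\ge 2$ is exactly what guarantees $\log_2\tilde d>0$ and hence legitimises the final division. I therefore expect the one substantive ingredient to be the ordering-to-bound step $\lambda_k\le 1/\tilde d$; everything else is routine.
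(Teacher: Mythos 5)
Your proposal is correct and follows essentially the same route as the paper's proof: establish $\lambda_{\tilde d}\le 1/\tilde d$ from the non-increasing ordering, restrict the entropy sum to the tail using nonnegativity of $-x\log_2 x$, apply the pointwise bound $\log_2(1/\lambda_k)\ge\log_2\tilde d$, and rearrange. Your added remarks on the $0\log_2 0=0$ convention and the role of $\tilde d\ge 2$ in justifying the division are sensible points of care but do not change the argument.
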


\begin{proof}
For $k>\tilde d$ we have $\lambda_k \le \lambda_{\tilde d}$. Since
$\sum_{k\le\tilde d}\lambda_k \le 1$, it follows that
$\tilde d\,\lambda_{\tilde d} \le 1$ and hence
$\lambda_{\tilde d} \le 1/\tilde d$. Thus
$\lambda_k \le 1/\tilde d$ for all $k>\tilde d$.

Restricting the entropy sum to the tail gives
\begin{align}
  H(\lambda)
  &= -\sum_k \lambda_k \log_2 \lambda_k
   \ge -\sum_{k>\tilde d} \lambda_k \log_2 \lambda_k \nonumber\\
  &= \sum_{k>\tilde d} \lambda_k \log_2 \frac{1}{\lambda_k}.
\end{align}
For $k>\tilde d$ we have $\lambda_k \le 1/\tilde d$, so
$\log_2 (1/\lambda_k) \ge \log_2 \tilde d$, and therefore
\begin{equation}
  H(\lambda)
  \ge \sum_{k>\tilde d} \lambda_k \log_2 \tilde d
  = \varepsilon_{\tilde d} \log_2 \tilde d.
\end{equation}
Rearranging yields the claimed inequality.
\end{proof}

Substituting Lemma~\ref{lem:entropy_tail_app} into Eq.~\eqref{eq:cfdr_tail_app}
gives the entropy-based bound quoted in the main text:
\begin{equation}
  R_F(\mathcal P,\tilde{\mathcal P})
  \;\le\;
  c\,\frac{H(\lambda)}{\log_2 \tilde d}.
  \label{eq:cfdr_entropy_app}
\end{equation}

\subsection{Slice matrix and support of the stationary bond state}

To relate $H(\lambda)$ to an algebraic quantity depending on the dilation, we
consider the stationary bond density operator $\rho_\star$ of the dilated
iMPS. In canonical form, $\rho_\star$ is the fixed point of the bond channel
\begin{equation}
  \mathcal E(\rho)
  = \sum_{x\in\mathcal X}\sum_{y\in\mathcal Y}
    A^{(x,y)} \rho A^{(x,y)\dagger}.
\end{equation}
The eigenvalues of $\rho_\star$ are the Schmidt coefficients $\{\lambda_k\}$
across the corresponding bond.

Define the slice matrix $K$ by horizontally concatenating all nonzero site
tensors $A^{(x,y)}$, viewed as linear maps on the bond space:
\begin{equation}
  K =
  \bigl[\,A^{(x_1,y_1)}\;\; A^{(x_2,y_2)}\;\; \cdots \bigr].
\end{equation}
The column space of $K$ is the span of the ranges of the individual
$A^{(x,y)}$.

\begin{lemma}[Support contained in the slice span]
\label{lem:support_K_app}
Let $\rho_\star$ be the stationary bond state of the dilated iMPS. Then the
support of $\rho_\star$ is contained in the column space of $K$, and hence
\begin{equation}
  \operatorname{rank}(\rho_\star)
  \le \operatorname{rank}(K).
\end{equation}
\end{lemma}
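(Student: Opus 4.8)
The plan is to exploit the fact that $\rho_\star$ is a fixed point of the bond channel $\mathcal E$ and relate its support to the ranges of the Kraus operators $A^{(x,y)}$. The key observation is that the fixed-point equation $\rho_\star = \sum_{x,y} A^{(x,y)}\rho_\star A^{(x,y)\dagger}$ expresses $\rho_\star$ as a sum of positive semidefinite terms, each of the form $A^{(x,y)}\rho_\star A^{(x,y)\dagger}$, whose range is contained in $\mathrm{range}(A^{(x,y)})$. First I would recall the elementary linear-algebra fact that for positive semidefinite operators $M_1,\dots,M_n$, the support of the sum equals the sum (span of the union) of the individual supports, i.e.\ $\mathrm{supp}\bigl(\sum_i M_i\bigr) = \sum_i \mathrm{supp}(M_i)$; this holds because no cancellation can occur among positive operators.

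Applying this to the fixed-point equation gives
\begin{equation}
  \mathrm{supp}(\rho_\star)
  = \sum_{x,y} \mathrm{supp}\bigl(A^{(x,y)}\rho_\star A^{(x,y)\dagger}\bigr)
  \subseteq \sum_{x,y} \mathrm{range}(A^{(x,y)}),
\end{equation}
where the inclusion uses $\mathrm{range}(A M A^\dagger)\subseteq \mathrm{range}(A)$ for any $M\succeq 0$. The right-hand side is precisely the span of the ranges of the individual site tensors, which by definition is the column space of the horizontally concatenated slice matrix $K=[\,A^{(x_1,y_1)}\ A^{(x_2,y_2)}\ \cdots\,]$. Hence $\mathrm{supp}(\rho_\star)\subseteq \mathrm{colspace}(K)$, and taking dimensions yields $\operatorname{rank}(\rho_\star)=\dim\mathrm{supp}(\rho_\star)\le \dim\mathrm{colspace}(K)=\operatorname{rank}(K)$, which is the claim.

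I expect the main (though minor) obstacle to be stating the positive-operator support identity cleanly and justifying $\mathrm{range}(AMA^\dagger)\subseteq\mathrm{range}(A)$ at the right level of rigour. The latter is immediate since $AMA^\dagger = A(M^{1/2})(M^{1/2})^\dagger A^\dagger = (AM^{1/2})(AM^{1/2})^\dagger$, whose range equals $\mathrm{range}(AM^{1/2})\subseteq\mathrm{range}(A)$; the former follows because for $v$ orthogonal to every $\mathrm{supp}(M_i)$ one has $\langle v|(\sum_i M_i)|v\rangle=\sum_i\langle v|M_i|v\rangle=0$, forcing $v\perp\mathrm{supp}(\sum_i M_i)$ by positivity, and conversely each $\mathrm{supp}(M_i)\subseteq\mathrm{supp}(\sum_i M_i)$ trivially. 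Everything else is a routine dimension count, so the proof is short.
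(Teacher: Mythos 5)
Your proof is correct, but it takes a genuinely different route from the paper. You argue directly from the fixed-point equation $\rho_\star = \sum_{x,y} A^{(x,y)}\rho_\star A^{(x,y)\dagger}$: since each summand is positive semidefinite with range inside $\mathrm{range}(A^{(x,y)})$, and since no cancellation occurs among positive operators, the support of $\rho_\star$ lies in the span of the slice ranges, i.e.\ the column space of $K$, in a single step. The paper instead iterates the bond channel from a full-support initial state $\rho_0$, observes that every iterate $\rho_n$ ($n\ge 1$) has support in the column space of $K$, and then passes to the limit $\rho_n \to \rho_\star$, which requires invoking primitivity of the channel (for convergence to the unique fixed point) and closedness of the subspace. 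Your argument is both more elementary and more general: it uses only that $\rho_\star$ is \emph{some} fixed point, with no appeal to primitivity, uniqueness, or convergence, and your two supporting facts --- $\mathrm{range}(AMA^\dagger) = \mathrm{range}\bigl((AM^{1/2})(AM^{1/2})^\dagger\bigr) \subseteq \mathrm{range}(A)$ and the no-cancellation property $\langle v|\sum_i M_i|v\rangle = 0 \Rightarrow \langle v|M_i|v\rangle = 0$ for $M_i \succeq 0$ --- are justified at an adequate level of rigour (indeed, you only need the inclusion direction of your support identity, not the full equality). What the paper's dynamical argument buys in exchange is the mildly stronger observation that the support collapses into $\mathrm{colspace}(K)$ after a single application of the channel, regardless of the initial state, though that extra information is not used for the rank bound.
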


\begin{proof}
Let $\rho_0$ be any initial positive operator with full support. Iterating
the bond channel gives $\rho_{n+1} = \mathcal E(\rho_n)$. Each term
$A^{(x,y)} \rho_n A^{(x,y)\dagger}$ has columns in the range of $A^{(x,y)}$,
and therefore in the column space of $K$. Hence $\rho_n$ has support
contained in that space for all $n\ge 1$. For a primitive channel
$\mathcal E$, the sequence $\rho_n$ converges to the unique fixed point
$\rho_\star$ as $n\to\infty$. The column space of $K$ is closed, so
$\rho_\star$ also has support contained in it. The rank bound follows.
\end{proof}

The Schmidt entropy is the von Neumann entropy of $\rho_\star$ in bits. For
any density matrix of rank $r$, the entropy is maximised by the uniform
distribution on its support and satisfies $H(\lambda) \le \log_2 r$. Applying
this with $r=\operatorname{rank}(\rho_\star)$ and using
Lemma~\ref{lem:support_K_app} yields
\begin{equation}
  H(\lambda)
  \le \log_2 \operatorname{rank}(\rho_\star)
  \le \log_2 \operatorname{rank}(K).
  \label{eq:entropy_rankK_app}
\end{equation}

\subsection{Combined slice-rank bound}

Combining Eq.~\eqref{eq:cfdr_entropy_app} with
Eq.~\eqref{eq:entropy_rankK_app} gives the slice-rank bound
\begin{equation}
  R_F(\mathcal P,\tilde{\mathcal P})
  \;\le\;
  c\,\frac{\log_2 \operatorname{rank} K}{\log_2 \tilde d},
\end{equation}
which appears as Eq.~\eqref{eq:cfdr_rank_bound} in the main text. The rank of
$K$ depends explicitly on the chosen labelling function $f$ through the set
of nonzero slices $A^{(x,y)}$, making this a genuinely label-aware
certificate that links the achievable compression error to the structure
induced by the dilation.

\end{document}